\documentclass[journal]{IEEEtran}
\ifCLASSINFOpdf
\else
\fi
%
%

\usepackage{lipsum}
\usepackage{cuted}
\usepackage{amssymb}
\usepackage{mathtools}
\usepackage{enumerate}
\usepackage{hhline}
\usepackage{bm}
\usepackage{enumitem}
\usepackage{graphicx}
\usepackage{subfigure}
\usepackage{xcolor}
\usepackage{color, soul} 
\usepackage{tcolorbox}
\tcbuselibrary{skins, breakable, theorems,fitting}
\usepackage{soul}
\usepackage{cite}
\usepackage{todonotes}
\usepackage{algpseudocode}
\usepackage[utf8]{inputenc}

\usepackage{algorithm}
\algdef{SE}[SUBALG]{Indent}{EndIndent}{}{\algorithmicend\ }%
\algtext*{Indent}
\algtext*{EndIndent}

\newtheorem{theorem}{Theorem}

\newtheorem{remark}{Remark}

\newtheorem{proof}{Proof }
\newtheorem{corollary}{Corollary}

\setlength{\marginparwidth}{2cm}  
\hyphenation{op-tical net-works semi-conduc-tor}

\begin{document}

\title{Nested Bayesian Optimization for Computer Experiments}

\author{Yan Wang,~\IEEEmembership{}
        Meng Wang,~\IEEEmembership{}
        Areej AlBahar,~\IEEEmembership{}
        Xiaowei Yue, ~\IEEEmembership{IEEE Senior Member}

\thanks{Manuscript received XX, 2021; revised XX, 2021.
\textit{(Corresponding author: Xiaowei Yue)}. Dr. Yue's research was partially supported by the National Science Foundation (2035038) and the Grainger Frontiers of Engineering Grant Award from the National Academy of Engineering (NAE); Dr. Wang's research was supported by the Natural Science Foundation
of Beijing Municipality (1214019).}
\thanks{Y. Wang and M. Wang are with the School of Statistics and Data Science, Faculty of Science, Beijing University of Technology, Beijing 100124, China. (e-mail: yanwang@bjut.edu.cn)}
\thanks{A. AlBahar and X. Yue are with the Grado Department of Industrial and Systems Engineering, Virginia Tech, Blacksburg, VA, 24061 USA (e-mail: areejaa3@vt.edu; xwy@vt.edu)}

}

\markboth{Manuscript to IEEE/ASME Trans}{}

\maketitle
\setcounter{page}{1}
\begin{abstract}
Computer experiments can {emulate the physical systems, help computational investigations, and yield analytic solutions. They have been widely employed with many engineering applications (e.g., aerospace, automotive, energy systems).} Conventional Bayesian optimization did not incorporate the nested structures in computer experiments. This paper proposes a novel nested Bayesian optimization {method} for complex computer experiments with multi-step or hierarchical characteristics. 
{We prove the theoretical properties of nested outputs given that the distribution of nested outputs is Gaussian or non-Gaussian.}
The closed forms of nested expected improvement are derived. We also propose the computational algorithms for nested Bayesian optimization. Three numerical studies show that the proposed nested Bayesian optimization { method outperforms the five benchmark Bayesian optimization methods that ignore the intermediate outputs} of the inner computer code. The case study shows that the nested Bayesian optimization can efficiently minimize the residual stress during composite structures assembly and {avoid convergence to local optima}.

\end{abstract}

\begin{IEEEkeywords}
Nested Computer Experiment, Bayesian Optimization, Gaussian Process, Surrogate Modeling, Multistage Manufacturing
\end{IEEEkeywords}

%
\IEEEpeerreviewmaketitle

\section{Introduction}
\label{intro}
%
%
%
%

\IEEEPARstart{C}{OMPUTER}{ experiments have become increasingly used in engineering simulations due to the development of information technology and computing power}. Especially for the scenarios where physical experiments are difficult, expensive, or impossible to implement, computer experiments can serve as proxy surrogates for and adjuncts to physical experiments \cite{santner2018design}. In advanced manufacturing and mechatronics, typical computer experiments may rely on Finite Element Analysis (FEA), Computational Fluid Dynamics (CFD), multiphysics simulation, variation propagation analysis, etc. Widely used engineering simulation software includes ANSYS, Matlab/Simulink, COMSOL Multiphysics, Solidworks, 3DCS. Sophisticated computer codes can model the multi-step or multi-physics processes accurately, thereby improving the efficiency of engineering design, system optimization, and quality control.

\subsection{Nested Computer Experiments}
Firstly, we will illustrate what is \emph{nested computer experiment}, and why the nested effect is very critical for engineering simulations, in particular for advanced manufacturing. {If one model or system contains the outputs of the other model or system, we call them \emph{nested}.} Nested property usually comes from the hierarchical structures of systems and multiphysics phenomena. In practice, one system often contains a few subsystems; the output of one subsystem could be the input for the sequential subsystem. Nested structures are ubiquitous in engineering simulation. {Suppose one computer experiment includes multi-layer sequential operations/codes, and outputs from one computer code may serve as the inputs for the other level of computer code. In that case, we call it a nested computer experiment.} The \emph{nested computer experiment codes} are also called  \emph{System of Solvers} in engineering. 

Most computer simulations and digital twins for multistage manufacturing processes (MMP) are nested, {because of the natural multi-step structure and inherent hierarchy in advanced manufacturing systems.} In MMP, multiple operations/stations are involved to produce one product \cite{shi2006stream}, as shown in Fig. \ref{figsov}. The product quality variations can propagate from one station to its downstream station. Stream of Variation methodologies have been developed to model and reduce the variation and improve the quality control \cite{shi2006stream,zhang2016stream}. When simulating the MMP in Fig. \ref{figsov}, the inputs for stage $k$ include two types: input quality features $\bm{q}_{k-1}$ from the upstream stage $k-1$, and the new process-induced deviations and noise at the current stage. Similarly, the outputted quality features  $\bm{q}_{k}$ of Stage $k$ will also serve as inputs for downstream stage $k+1$. Wen et al. developed a computer simulation for composite aircraft assembly process \cite{wen2018feasibility,wen2019virtual}, where the simulation needs multiple steps even for a single-stage assembly, as shown in Fig. \ref{Figure flow}. Therefore, the omnipresent nested structure needs to be incorporated when modeling computer experiments.

\begin{figure}[t!]
\centering
\includegraphics[width=\columnwidth]{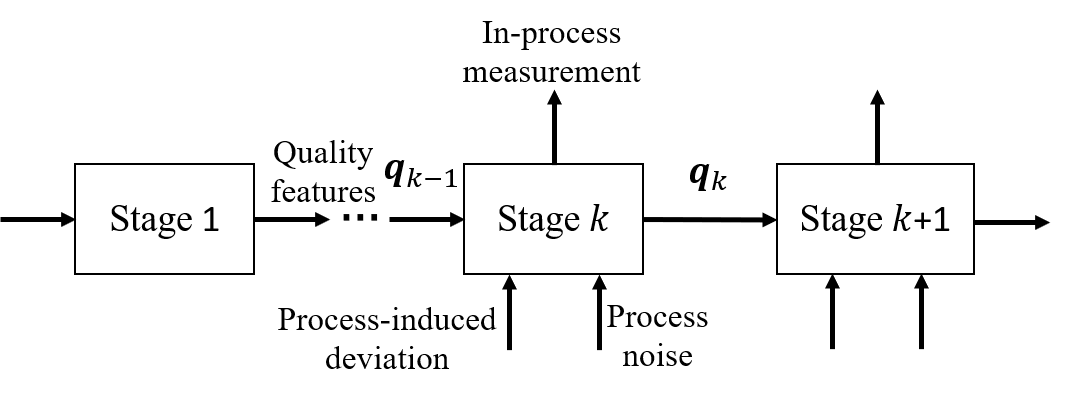}
\caption{Variation Propagation in Multistage Manufacturing Systems.}
\label{figsov}
\end{figure}



\subsection{Literature Review}
In this section, {we conduct the literature review from three fields: mechatronics, advanced statistics, and manufacturing systems.}

In the mechatronics field, Rodriguez et al. developed one hybrid control scheme with two nested loops for twisted string actuators \cite{rodriguez2020hybrid}. Nested design techniques have been used for co-design of controlled systems \cite{kamadan2017co}. Zeng et al. proposed a nested optimization strategy to guarantee cost control for a motor driving system \cite{zeng2019integrated}. The performance-based nested Kriging model was constructed to interpolate the Antenna characteristics data \cite{koziel2019performance}. Nested long-short term memory (LSTM) networks were incorporated into deep learning architecture for multivariate air quality prediction \cite{jin2021multivariate}. A nested tensor product model transformation was used to analyze the Takagi-Sugeno fuzzy system for system control design \cite{yu2018nested}. These approaches make full use of the nested structure for various objectives (control, design, prediction, etc.) and achieve excellent performance.

In the advanced statistics field, researchers investigated nested effects in computer experiments. Nested space-filling designs were constructed for computer experiments with two levels of simulation accuracy \cite{qian2009nested}. Next, nested Latin hypercube designs with sliced structures were proposed for experimental data collection \cite{chen2015nested}. Hung et al. developed the optimal Latin hypercube designs and kriging methods incorporating nested factors and branching factors \cite{hung2009design}. Marque-Pucheu et al. proposed an efficient dimension reduction method for Gaussian process emulation of two nested codes \cite{marque2020efficient}. Keogh and White investigated nested case-control and case-cohort study on exposure-disease association \cite{keogh2013using}. These methods significantly improve the efficiency and effectiveness of data collection, model emulation, and association analysis in advanced statistics.

In the advanced manufacturing field, nested systems have also been investigated. Gibson et al. used multivariate nested distributions to model semiconductor process variability \cite{gibson1999using}. Similarly, Tian et al. analyzed the nested variation pattern in the batch processes of semiconductor manufacturing, and proposed a two-level nested control chart for process monitoring \cite{tian2016two}. Jin and Shi developed a reconfigured piecewise linear regression tree to model the nested structure for process control in multistage manufacturing \cite{jin2012reconfigured}. Savin and Vorochaeva developed a quadratic programming based controller with nested structure, and it achieved excellent performance in planar pipeline robots \cite{savin2017nested}. Wang et al. proposed multiresolution and multisensor fusion network for fault diagnosis, with integration of multiple network structures \cite{wang2019multilevel}. {These methods enhanced variability modeling, process control, and quality assurance by accommodating the nested structure.}

\subsection{Novelty and Contributions}

Although numerous techniques have been investigated in studying and using nested effect, as mentioned in the literature review above, global optimization for nested computer experiments {still lacks a systematic science base.} This paper focuses on the global optimization of nested computer experiments. We mainly use \emph{two-layer nested computer models} as one example for nested computer experiments. The first-layer code is denoted as the inner computer model, and the second one as the outer computer model. The nested structure indicates that the outputs of the inner computer model are part of inputs of the outer computer model. The inner computer model and outer computer model are very complex and they are assumed to be black-box.

{Bayesian optimization is an efficient approach to obtain the global optimal solution for complex computer experiments given specific objectives.} {This approach has proven to be successful in many real-world engineering optimization problems, such as  the robust parameter design} \cite{tan2020bayesian}, {the multi-objective optimization problems} \cite{shu2020new,biswas2022multi,mathern2021multi}, { the constrained optimization problems }\cite{tran2019constrained}. 
The main steps of a standard Bayesian optimization method include: (i) Build a statistical surrogate model based on previous computer outputs; (ii) Choose an acquisition function and sequentially query the objective function at points which maximize the acquisition. For step (i), the most popular stochastic surrogate model  is the Gaussian Process (GP) model \cite{santner2018design}. 
For step (ii), commonly used acquisition functions include the Expected Improvement (EI) \cite{jones1998efficient,ranjan2013comment}, the Lower/Upper
Confidence Bound (LCB)  \cite{srinivas2010gaussian}, and the Expected Quantile Improvement (EQI) acquisition functions \cite{picheny2013quantile}.  
Despite the wide applications of Bayesian optimization methods,
these existing methods ignored the outputs of the inner computer model and treated all the inputs characterizing the system of interest as a single input vector. When trying to find the global optimal solution of nested computer experiments, these existing Bayesian optimization methods are less efficient, since the nested structure information is ignored in the optimization. Astudillo and Frazier \cite{astudillo2019bayesian} considered Bayesian optimization of composite functions and took the outputs of the inner  part of a composite function into account. This method performs excellent when the outer part of a composite function is a known, cheap-to-evaluated, and real-valued function. It does not work well for the complex black-box functions with nested structure, which is more common in engineering computer experiments.

{In this work, we proposed a novel and systematic Bayesian optimization {method} for nested computer experiments. We assume that both the inner and outer computer models are deterministic, but expensive-to-evaluate}. Our contributions can be summarized as follows: 
 \begin{itemize}
   \item  The nested Bayesian optimization {method} is proposed to incorporate the nested structures in complex computer experiments. This method can learn the global optimum more efficiently and avoid convergence to the local optimum. 
   \item  {We investigated the theoretical properties of the nested Gaussian process for two cases: 1. it can be approximated by a Gaussian process and 2. it cannot be approximated by a Gaussian process.} Furthermore, we derive the closed forms of nested expected improvement and propose a computational algorithm for nested Bayesian optimization.  
   \item  Based on the composite structures assembly case study, we show {that nested Bayesian optimization can minimize the residual stress after assembly. We also show the proposed nested Bayesian optimization performs better than five benchmark methods via numerical studies.} 
\end{itemize}
The outline of this paper is as follows: Section \ref{sec.prob} introduces the optimization problem of two-nested computer experiments.  Section \ref{sec.meth} proposes the nested Bayesian optimization method.  Section \ref{sec.num} and Section \ref{case} compare the proposed method with the standard Bayesian optimization method {by using three numerical studies and a real case study}.  Concluding remarks are given in Section \ref{sec.dis}. Appendices  contain {detailed proofs of the theorems and selection of correlation functions}

\section{Problem setting}
\label{sec.prob}

In this section, we use mathematical models to describe the problem setting. Denote $f:\mathcal{X} \rightarrow\mathbf{R}$ to be a nested computer model, which is defined as 
\begin{equation}
f(\Tilde{\bm x})=g(\bm h^T(x), x'); \Tilde{\bm x}=(x,x')^T\in\mathcal{X}\subset \mathbf{R}^d,
\end{equation}
where 
$\bm h(x)=\left(h_1(x),\ldots,h_p(x)\right)^T, p\geq 1$ is a
vector of inner computer model outputs.  $g(\cdot)$  is the outer computer model whose inputs include outputs of the inner computer model $\bm h(x)$ and the additional control variable $x'$. {There is a  serial relationship between the inner computer model and outer computer model. Intermediate outputs $\bm h(x)$ and $x'$ are parallel inputs. }
Fig. \ref{Figure nce} shows the framework of nested computer experiments:
\begin{figure}[h!]
\centering
\includegraphics[width=\columnwidth]{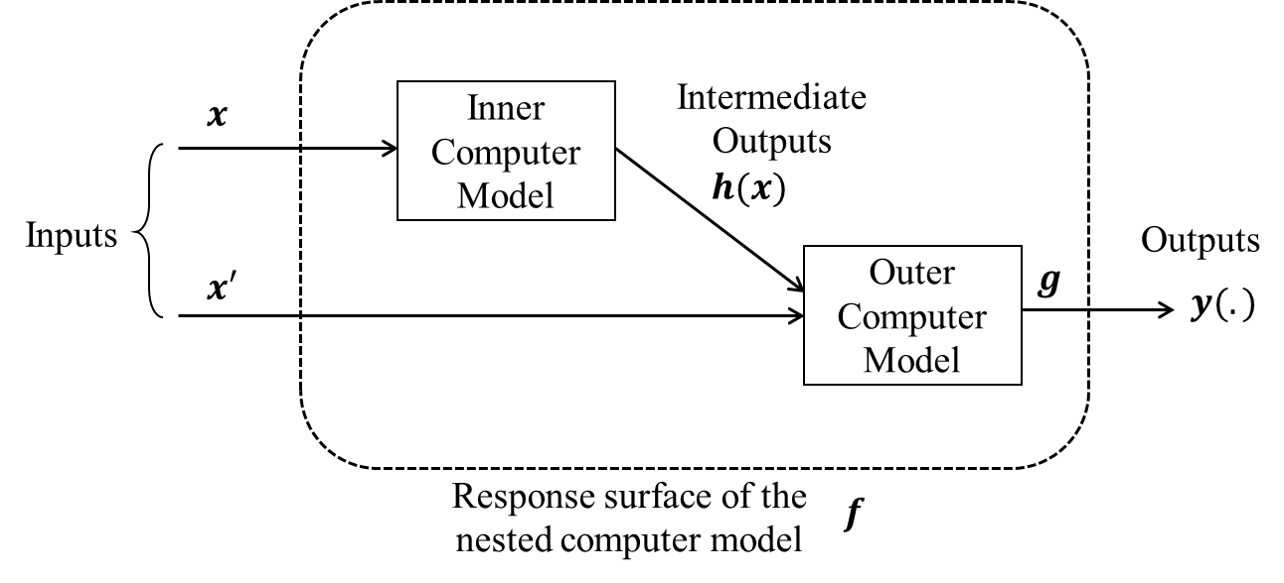}
\caption{Nested computer experiments.}
\label{Figure nce}
\end{figure}

Suppose {these two computer models are black-box, deterministic, expensive-to-evaluate}, and the gradient information is not available.  With the help of a limited number of  outputs from both computer models, we consider the problem of finding a minimizer of the entire response surface of the nested computer model $f$: 
\begin{equation}
\label{goal}
\Tilde{\bm x}^{*}=\operatorname*{argmin}_{\Tilde{\bm x}\in\mathcal{X}}f(\Tilde{\bm x}).
\end{equation}

 {Specifically, suppose the nested computer experiments are conducted at the points $\tilde{X}_n=(\Tilde{\bm x}_1,\ldots, \Tilde{\bm x}_n)^T$, which  contains the collections of $\{\bm x_1,\ldots,\bm x_n \}$ and   $\{\Tilde{\bm x}_1,\ldots,\Tilde{\bm x}_n \}$. The first-layer computer model generates intermediate outputs $H_n=(\bm h(x_1),\ldots,\bm h(x_n))^T$, and the second-layer computer model generates the outputs $Y_n=\left(g(\bm h^T(x_1), x'_1),\ldots,g(\bm h^T(x_n),x'_n)\right)^T$. These computer experiments yield data $D_n=\{\tilde{X}_n,H_n,Y_n\}$. The goal of this work is to query $\Tilde{\bm x}^{*}$ by making full use of  the dataset $D_n$.
}

As discussed above, the standard Bayesian optimization method  can be used to solve the optimization problem (\ref{goal}). This approach can query the optimal point of $f$ sequentially  by optimizing an acquisition function. {In this work, we focus on the EI criterion} \cite{jones1998efficient,santner2018design}. { Detailed comparisons are conducted between EI, LCB, and EQI-based approaches in Section} \ref{sec.num} { and Section} \ref{case}.

 The main idea of EI  is to sample the point offering the greatest
expected improvement over the current best sampled point. Let $f_n^*=\min_{i=1}^n\{y_i\}$ be the current best objective value,  given data
$\{\tilde{X}_n,Y_n\}$, the EI  function becomes:
\begin{equation}
\label{ei}
{\rm EI}_n(\Tilde{\bm x})=E_{f|\tilde{X}_n,Y_n}(f_n^*-f(\Tilde{\bm x}))_{+},
\end{equation}
where 
$(f_n^*-f(\Tilde{\bm x}))_{+}=\max\{f_n^*-f(\Tilde{\bm x}),0\}$  is the improvement utility function. 

It can be known that the evaluation of EI depends on  the posterior distribution ${f|\tilde{X}_n,Y_n}$.   Since the posterior distribution ${f|\tilde{X}_n,Y_n}$ in standard Bayesian optimization method ignores the outputs of the inner computer model, {it leads to low optimization efficiency or even getting stuck in a local optimum when the number of samples is limited.} 
To overcome this limitation, we will develop a new Bayesian optimization method to incorporate the nested structure and identify the optimal solution for complex computer experiments.

\section{Nested Bayesian optimization }
\label{sec.meth}
{Nested computer experiments are ubiquitous when running engineering simulations, digital twin or finite element analysis. Conventional Bayesian optimization approaches consider the entire system as a whole and try to identify the global optimum for black-box functions. They are less efficient in complex systems optimization when nested structures exist. The nested structures usually can be determined according to the system configurations or engineering knowledge. By incorporating the nested structures of complex systems, we can make full use of more information in Bayesian optimization, intuitively avoid getting stuck in some local optima, and have the potential to improve optimization efficiency.}
In this section, we propose a novel method, named as \emph{Nested Bayesian Optimization (NBO)}{, to  query  the global optimal solution of  nested computer experiments}. To approximate the outputs of nested computer experiments,  we first introduce nested Gaussian Process (NGP) models in Section \ref{sec:NGP}. Next, we derive the closed forms of the expected improvement  acquisition function for nested computer experiments in Section \ref{sec:nei}, under the cases that the NGP models are Gaussian and non-Gaussian.  Section \ref{sec:alg} provides a detailed algorithm of the NBO method.

\subsection{Nested Gaussian Process models}
\label{sec:NGP}
In this work, Gaussian Process (GP) models \cite{santner2018design} are used to mimic the inner and the outer computer models.
Suppose $h$ and $g$ are realizations of two Gaussian Processes. 
Given data $D_n$, the posterior distribution of the inner computer model at an unobserved input $x$ is
\begin{equation}
\label{posnestin}
\bm h(x)| D_n\sim N(\hat {\bm h}_{n}(x),\bm s_{h}^2(x)),
\end{equation}
where $\hat {\bm h}_{n}(x)$ is a $p\times 1$ mean vector, and $\bm s_{h}^2(x)$ is a $p\times p$ covariance matrix. The posterior distribution of the outer computer model at an unobserved input $x^{out}=(\bm h^T, x')$  is
\begin{equation}
\label{posnest2}
g(x^{out})|D_n \sim N(\hat g_n(x^{out}),s_g^2(x^{out})).
\end{equation}
 Formulations of the posterior mean and posterior variance function are given by (\ref{blue})  and (\ref{var}), respectively. 
  More Details about the Gaussian Process models can be found in Appendix \ref{gp model}.


The \emph{nested Gaussian Process {(NGP)} model} is expressed as 
\begin{equation}
\label{nested gp}
 f(\Tilde{\bm x})|D_n=\hat g_n({\Psi}^T(x),x')+s_g({\Psi}^T(x),x')\xi_g.
 \end{equation}
where $\Psi(x)={\bm h}(x)|D_n$, $\xi_g$ is a standard normal random variable. From the posterior distribution of the inner computer model (\ref{posnestin}), $\Psi(x)$ can be represented as 
$\Psi(x)=\hat {\bm h}_n(x)+\bm s_h(x) \bm \xi_h$,
 where $ \bm \xi_h$  is a $p\times 1$  random vector that follows the normal distribution and it is independent from $ \xi_g$. 
By numerical calculations, we have that,
 the posterior variance of $f(\tilde{\bm x})|{D_n}$ is zero for any $i=1,\ldots,n$, and the posterior mean  is interpolating the observed data values $(\tilde{X}_n,Y_n)$.

From (\ref{nested gp}), we can see that $\Psi(x) $ obeys a normal distribution when $\bm s_h(x)\neq 0$. As a function of $\Psi(x) $, the posterior distribution of $f(\tilde{\bm x})|{D_n}$ may not be normal.
 Therefore, we will investigate two cases, Gaussian and non-Gaussian in the following part.
 
 Theorem \ref{thm-1} focuses on the Gaussian case, while Theorem \ref{lemma2} analyzes the non-Gaussian case. 



\begin{theorem}
\label{thm-1}
Denote $\mu_{Z}(\Tilde{\bm x})=\hat g_n(\hat {\bm h}^T_n(x),x')$ and $s^2_{Z}(\Tilde{\bm x})=s^2_g(\hat {\bm h}^T_n(x),x')$. The NGP model (\ref{nested gp}) { is the following GP model}
\begin{equation}
\label{gp case}
\begin{aligned}
 GP(\mu_{Z}(\Tilde{\bm x}),s^2_{Z}(\Tilde{\bm x})),
 \end{aligned}
\end{equation}
if and only if for all $\Tilde{\bm x}\in \mathcal{X}$,  {there is  
$\bm s_h(x)= \bm 0_{1\times p}$.}
\end{theorem}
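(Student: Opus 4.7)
The plan is to prove the iff by treating the two directions separately. The forward direction ($\Leftarrow$) is essentially a substitution, while the converse ($\Rightarrow$) is the substantive content and requires a nondegeneracy argument on the outer posterior.

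For ($\Leftarrow$), I would assume $\bm s_h(x)\equiv\bm 0$ on $\mathcal X$. By (\ref{posnestin}) the inner posterior collapses to the point mass $\hat{\bm h}_n(x)$, so $\Psi(x)=\hat{\bm h}_n(x)+\bm s_h(x)\bm\xi_h=\hat{\bm h}_n(x)$ is deterministic. Substituting into the NGP representation (\ref{nested gp}) yields
\begin{equation*}
f(\Tilde{\bm x})|D_n=\hat g_n(\hat{\bm h}_n^T(x),x')+s_g(\hat{\bm h}_n^T(x),x')\xi_g=\mu_Z(\Tilde{\bm x})+s_Z(\Tilde{\bm x})\xi_g,
\end{equation*}
with $\xi_g\sim N(0,1)$, so the marginal is $N(\mu_Z(\Tilde{\bm x}),s_Z^2(\Tilde{\bm x}))$. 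Joint Gaussianity on any finite grid of inputs is inherited from the fact that $g(\cdot)|D_n$ is itself a GP by (\ref{posnest2}) and the NGP is simply its composition with the deterministic map $x\mapsto(\hat{\bm h}_n^T(x),x')$.

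For ($\Rightarrow$), I would proceed by contraposition: suppose $\bm s_h(x_0)\neq\bm 0$ at some $x_0\in\mathcal X$, so $\Psi(x_0)$ is a nondegenerate Gaussian vector. Using the tower property together with $\xi_g\perp\bm\xi_h$, $E[\xi_g]=0$, $\operatorname{Var}[\xi_g]=1$, the first two moments of $f(\Tilde{\bm x}_0)|D_n$ decompose as
\begin{align*}
E[f(\Tilde{\bm x}_0)|D_n] &= E_{\bm\xi_h}\!\left[\hat g_n(\Psi^T(x_0),x'_0)\right],\\
\operatorname{Var}[f(\Tilde{\bm x}_0)|D_n] &= E_{\bm\xi_h}\!\left[s_g^2(\Psi^T(x_0),x'_0)\right]+\operatorname{Var}_{\bm\xi_h}\!\left[\hat g_n(\Psi^T(x_0),x'_0)\right].
\end{align*}
If the NGP coincides with $GP(\mu_Z,s_Z^2)$, these must equal $\hat g_n(\hat{\bm h}_n^T(x_0),x'_0)$ and $s_g^2(\hat{\bm h}_n^T(x_0),x'_0)$, respectively. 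Matching variances forces $\operatorname{Var}_{\bm\xi_h}[\hat g_n(\Psi^T(x_0),x'_0)]=0$, i.e.\ $\hat g_n(\cdot,x'_0)$ is almost surely constant on the full-dimensional Gaussian support of $\Psi(x_0)$, contradicting the fact that the correlation functions selected in the appendix generate strictly nonconstant posterior means.

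The main obstacle is the nonlinearity hypothesis hidden in the converse: one has to rule out the pathological scenario in which $\hat g_n(\cdot,x'_0)$ is affine or constant on the support of $\Psi(x_0)$ while $s_g^2(\cdot,x'_0)$ averages exactly back to $s_g^2(\hat{\bm h}_n^T(x_0),x'_0)$. I would dispatch this by appealing to the correlation families used throughout the paper, whose posterior mean is strictly nonlinear on every open subset of its domain, so the moment-matching decomposition above cannot hold unless $\Psi(x)$ collapses to a point mass, i.e.\ $\bm s_h(x)\equiv\bm 0$.
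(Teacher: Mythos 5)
Your forward direction is essentially the paper's: set $\bm s_h(x)\equiv\bm 0$, plug $\Psi(x)=\hat{\bm h}_n(x)$ into (\ref{nested gp}), and read off $N(\mu_Z,s_Z^2)$; your extra remark on finite-dimensional laws is a harmless strengthening. The converse is where you depart from the paper, and it contains a genuine gap. By the law of total variance, equality of the NGP with the GP (\ref{gp case}) at $\tilde{\bm x}_0$ gives ${\rm E}_{\bm\xi_h}\big[s_g^2(\Psi^T(x_0),x_0')\big]+{\rm Var}_{\bm\xi_h}\big[\hat g_n(\Psi^T(x_0),x_0')\big]=s_g^2(\hat{\bm h}_n^T(x_0),x_0')$. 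This does \emph{not} force ${\rm Var}_{\bm\xi_h}[\hat g_n(\Psi^T(x_0),x_0')]=0$: the posterior variance $s_g^2(\cdot,x_0')$ is neither convex nor constant in $\bm h$ (it collapses to zero at the design points), so ${\rm E}_{\bm\xi_h}[s_g^2(\Psi)]$ can be strictly smaller than $s_g^2(\hat{\bm h}_n)$ and exactly absorb a positive ${\rm Var}[\hat g_n(\Psi)]$. You name this cancellation yourself, but the fix you offer—strict nonlinearity of the posterior mean for the kernels of Appendix \ref{rbfs}—only excludes $\hat g_n(\cdot,x_0')$ being affine or constant; it says nothing about how $s_g^2$ averages, and it also imports kernel-specific hypotheses that are not part of Theorem \ref{thm-1}, which is stated for arbitrary inner/outer GP posteriors. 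Likewise the mean condition ${\rm E}_{\bm\xi_h}[\hat g_n(\Psi)]=\hat g_n(\hat{\bm h}_n)$ can hold for genuinely nonlinear $\hat g_n$ (odd behaviour about $\hat{\bm h}_n$), so matching the first two moments cannot by itself deliver the contradiction.

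For comparison, the paper's converse is structural rather than moment-based (and is itself only sketched): it asserts that Gaussianity of $\hat g_n(\Psi)+s_g(\Psi)\xi_g$ with $\Psi$ Gaussian forces either no dependence of $\hat g_n,s_g$ on the inner outputs (excluded by the nested structure) or a degenerate $\Psi$, i.e.\ $\bm s_h=\bm 0$. A robust way to complete your contrapositive is to argue through distributional shape rather than the first two moments: when $\bm s_h(x_0)\neq\bm 0$ and $s_g$ varies with $\bm h$, the expansion underlying Theorem \ref{lemma2} produces the cross term $\bm c_{h,g}^T(\tilde{\bm x}_0)\bm\xi_h\xi_g$, a normal-product component with nonzero excess kurtosis, so the law of $f(\tilde{\bm x}_0)\mid D_n$ cannot be normal, let alone $N(\mu_Z,s_Z^2)$; in the residual sub-case where $s_g$ is locally constant in $\bm h$ but $\hat g_n$ depends on it affinely, the law is normal but its variance strictly exceeds $s_Z^2(\tilde{\bm x}_0)$ by ${\rm Var}[\hat g_n(\Psi)]>0$, again contradicting (\ref{gp case}). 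As written, your converse direction is incomplete.
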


For ease of understanding, here we give the brief proof of Theorem \ref{thm-1}. First,  $\bm s_h(x)= \bm 0_{1\times p}$  indicates that {the surrogate of inner computer model is deterministic}. By plugging $\Psi( x)=\hat{\bm h}_n(x)$ into (\ref{nested gp}),{ we can derive that  $f(\tilde{\bm x})|{D_n}$ obeys a normal distribution for fixed $\tilde{\bm x}$. 
In addition, the NGP model is gaussian, implying that at least one of the following two conditions holds:}
\begin{itemize}
    \item  {The outer computer model is independent on the inner computer outputs, i.e., the NGP model} (\ref{nested gp}) {can be expressed as $\hat g_n(x')+s_g(x')\xi_g$. Due to the nested structure, both $\hat g_n$ and $s_g$ depend on $\Psi$. This condition  is not true.
 
  }
    \item  {$\Psi( x)=\hat{\bm h}_n(x)$. It indicates that $\bm s_h(x)$ equals to zero and the {surrogate of inner computer model is deterministic}.}
\end{itemize}

 Theorem \ref{thm-1} {states that for a nested computer model, the NGP is a GP model if and only if  {the surrogate of inner computer  model is deterministic}. This condition is hard to achieve or even unattainable in some cases.  Indeed, from Corollary} \ref{cor-gp} {, when $\bm s_h$ is close to $\bm 0$, i.e., the inner GP model can achieve satisfactory prediction accuracy,   the  GP model} (\ref{gp case}) { can be used to mimic the nested computer experiments.}


\begin{theorem}
\label{lemma2}
{Denote} $\bm c^T_h(\Tilde{\bm x})=\frac{\partial \hat g_n}{\partial \bm h}(\hat {\bm h}^T_n(x),x')\bm s_h(x)$,  $c_g(\Tilde{\bm x})=s_g(\hat {\bm h}^T_n(x),x')$, and $\bm c^T_{h,g}(\Tilde{\bm x})=\frac{\partial  s_g}{\partial \bm h}(\hat {\bm h}^T_n(x),x')\bm s_h(x)$. {Assume that the second order derivatives of $\hat g_n$ and $s_g$ with respect to $\bm h$  are uniformly bounded.} 
{The NGP model} (\ref{nested gp}) is a non-Gaussian Process model if and only if there is  $ \Tilde{\bm x}\in \mathcal{X}$, {such that   $\bm s_h(x)\neq \bm 0_{1\times p}$.} Specifically, in this case, the NGP model (\ref{nested gp}) {can be approximated by}
\begin{equation}
\label{ngp-general}
\begin{aligned}
Z(\Tilde{\bm x})=Z_1(\Tilde{\bm x})Z_2(\Tilde{\bm x})+z_0(\Tilde{\bm x}).
 \end{aligned}
\end{equation}
{Here,   $Z_1(\Tilde{\bm x})$ and $Z_2(\Tilde{\bm x})$ are  independent Gaussian Processes with mean functions $\mu_1(\Tilde{\bm x})={c_g(\Tilde{\bm x})}/\sqrt{\bm c^T_{h,g}(\Tilde{\bm x})\bm c_{h,g}(\Tilde{\bm x})}$, $\mu_2(\Tilde{\bm x})=\sqrt{\bm c^T_{h}(\Tilde{\bm x})\bm c_{h}(\Tilde{\bm x})}$ respectively and variance functions $\sigma^2_1(\Tilde{\bm x})=1$, $\sigma^2_2(\Tilde{\bm x})={\bm c^T_{h,g}(\Tilde{\bm x})\bm c_{h,g}(\Tilde{\bm x})}$ respectively; $z_0(\Tilde{\bm x})=\mu_Z(\Tilde{\bm x})- \mu_1(\Tilde{\bm x})\mu_2(\Tilde{\bm x})$. In addition,  the mean and variance functions  of $Z(\Tilde{\bm x})$ are }

\begin{equation}
\label{post-ngp}
\begin{aligned}
{\rm{E}}[Z(\Tilde{\bm x})]=&\mu_Z(\Tilde{\bm x})=\hat g_n(\hat {\bm h}^T_n(x),x'),\\
{\rm Var}[Z(\Tilde{\bm x})]=&\bm c^T_h(\Tilde{\bm x})\bm c_h(\Tilde{\bm x})+c_g^2(\Tilde{\bm x})+\bm c^T_{h,g}(\Tilde{\bm x})\bm c_{h,g}(\Tilde{\bm x}).
 \end{aligned}
\end{equation}
\end{theorem}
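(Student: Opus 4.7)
The plan is to split the proof into two independent parts: first settle the iff equivalence as a direct corollary of Theorem \ref{thm-1}, then establish the approximation (\ref{ngp-general}) by a first-order Taylor expansion followed by moment matching. Throughout, I treat $\bm\xi_h$ and $\xi_g$ as the independent Gaussian noises introduced in Section \ref{sec:NGP}.

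For the equivalence, I would argue by contraposition. Theorem \ref{thm-1} already asserts that the NGP model (\ref{nested gp}) is a GP if and only if $\bm s_h(x)=\bm 0_{1\times p}$ for every $\tilde{\bm x}\in\mathcal{X}$. Hence the NGP fails to be a GP precisely when there exists some $\tilde{\bm x}$ at which $\bm s_h(x)\neq\bm 0_{1\times p}$. Because the NGP is always a well-defined stochastic process obtained by composing a Gaussian outer surrogate with a Gaussian inner surrogate, ``not a GP'' is equivalent to ``non-Gaussian Process'' in the sense that at least one finite-dimensional marginal is not normal, which closes the iff direction without additional calculation.

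For the approximation, I would first substitute $\Psi(x)=\hat{\bm h}_n(x)+\bm s_h(x)\bm\xi_h$ into (\ref{nested gp}) and Taylor-expand both $\hat g_n(\cdot,x')$ and $s_g(\cdot,x')$ about the point $\hat{\bm h}_n(x)$ to first order in the perturbation $\bm s_h(x)\bm\xi_h$. The uniform boundedness of the second-order derivatives of $\hat g_n$ and $s_g$ in $\bm h$ is exactly what is needed to make the quadratic remainders vanish in the approximation sense. After collecting the linear terms one obtains
\begin{equation*}
f(\tilde{\bm x})\mid D_n \;\approx\; \mu_Z(\tilde{\bm x}) + \bm c_h^T(\tilde{\bm x})\bm\xi_h + \bigl(c_g(\tilde{\bm x})+\bm c_{h,g}^T(\tilde{\bm x})\bm\xi_h\bigr)\xi_g,
\end{equation*}
with $\bm c_h$, $c_g$, $\bm c_{h,g}$ exactly as defined in the statement. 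Since $\bm\xi_h$ and $\xi_g$ are independent standard normals and $E[\xi_g]=0$, a direct computation gives mean $\mu_Z(\tilde{\bm x})$ and variance $\bm c_h^T\bm c_h+c_g^2+\bm c_{h,g}^T\bm c_{h,g}$, which matches (\ref{post-ngp}).

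The last step is to exhibit an independent product process $Z=Z_1 Z_2+z_0$ with the same first two moments. For any independent Gaussians with means $\mu_1,\mu_2$ and variances $\sigma_1^2,\sigma_2^2$, the product has mean $\mu_1\mu_2$ and variance $\sigma_1^2\mu_2^2+\sigma_2^2\mu_1^2+\sigma_1^2\sigma_2^2$. Plugging in the theorem's choice $\mu_1=c_g/\sqrt{\bm c_{h,g}^T\bm c_{h,g}}$, $\mu_2=\sqrt{\bm c_h^T\bm c_h}$, $\sigma_1^2=1$, $\sigma_2^2=\bm c_{h,g}^T\bm c_{h,g}$, and setting $z_0=\mu_Z-\mu_1\mu_2$, both moments align with those of the Taylor approximation, which yields (\ref{ngp-general}) and (\ref{post-ngp}) simultaneously. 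The main obstacle I expect is conceptual rather than computational: the product form $Z_1 Z_2+z_0$ is a moment-matching surrogate, not a distributional identity, so I would need to justify that retaining only the first two moments captures the essential non-Gaussian behaviour carried by the cross term $\bm c_{h,g}^T\bm\xi_h\,\xi_g$, and that the discarded higher-order contributions are controlled by the same uniform second-derivative bound used in the Taylor step.
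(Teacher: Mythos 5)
Your first two steps follow the paper's own route: the ``if and only if'' is obtained from Theorem \ref{thm-1} by contraposition, and the first-order stochastic Taylor expansion of (\ref{nested gp}) about $\hat{\bm h}_n(x)$ yields precisely the paper's linearized process $Z(\Tilde{\bm x})=\mu_Z(\Tilde{\bm x})+\bm c_h^T(\Tilde{\bm x})\bm \xi_h+c_g(\Tilde{\bm x})\xi_g+\bm c_{h,g}^T(\Tilde{\bm x})\bm \xi_h\xi_g$, with the bounded second derivatives controlling the remainder. (The paper makes that last point precise by invoking a Lagrange-remainder bound of order $O([\sum_k[\bm s_h(x)\bm\xi_h]_k]^2)$ together with a result giving $\sup_x \bm s_h(x)\bm\xi_h\to\bm 0$ in probability; your phrase that boundedness ``is exactly what is needed'' should be replaced by such an argument, but this is a matter of rigor, not of approach.) Your mean and variance computation for the linearized process is correct and reproduces (\ref{post-ngp}).

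The genuine gap is the final step, and you identify it yourself: you produce $Z_1Z_2+z_0$ only as a two-moment surrogate and concede it is ``not a distributional identity.'' But the theorem, and everything built on it -- the exact density (\ref{exactpdf}), the CDF (\ref{cdf.2}), and the NEI integral (\ref{ngp-ei}) -- uses the product form as the \emph{distribution} of the approximating process, so matching the first two moments is not enough. The paper closes this by an algebraic factorization instead of moment matching: representing the inner-model randomness through a single standard normal $\eta_h$, i.e.\ $\bm c_h^T\bm\xi_h=s_1(\Tilde{\bm x})\eta_h$ and $\bm c_{h,g}^T\bm\xi_h=s_2(\Tilde{\bm x})\eta_h$ with $s_1^2=\bm c_h^T\bm c_h$ and $s_2^2=\bm c_{h,g}^T\bm c_{h,g}$, one checks directly (suppressing the argument $\Tilde{\bm x}$) that
\begin{equation*}
Z=\Bigl(\eta_h+\frac{c_g}{s_2}\Bigr)\bigl(s_2\xi_g+s_1\bigr)+\Bigl(\mu_Z-\frac{c_g s_1}{s_2}\Bigr)=Z_1Z_2+z_0,
\end{equation*}
so that $Z_1$ carries only $\eta_h$, $Z_2$ carries only $\xi_g$, independence and the stated means and variances hold by construction, and the moment formulas (\ref{post-ngp}) drop out as a by-product of independence rather than being the target. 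Note that identifying both $\bm c_h^T\bm\xi_h$ and $\bm c_{h,g}^T\bm\xi_h$ with a common $\eta_h$ is exact when $\bm c_h$ and $\bm c_{h,g}$ are proportional (e.g.\ $p=1$) and is the representation the paper adopts in general; either way, it is this factorization, not moment matching, that licenses the normal-product law of $Z$ and hence the closed-form NEI. As written, your argument establishes (\ref{post-ngp}) but not (\ref{ngp-general}) in the sense the theorem requires.
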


\begin{remark}
{For a fixed $\Tilde{\bm x}\in  \mathcal{X}$,  $Z(\Tilde{\bm x})$ is a non-Gaussian random variable. 
The exact probability density function of  $Z(\Tilde{\bm x})$ is given by} (\ref{exactpdf}). {If  $z_0(\Tilde{\bm x})=0$,  $Z(\Tilde{\bm x})$ follows a normal product (NP) distribution} \cite{2016Exact},  {which is in general non-Gaussian. Especially,  if $Z_1(\Tilde{\bm x})\sim N(0,1)$ and $Z_2(\Tilde{\bm x})\sim N(0,1)$, then density function of $Z_1(\Tilde{\bm x})Z_2(\Tilde{\bm x})$ is}
\begin{equation}\nonumber
\displaystyle p_{Z}(z)= \frac{K_{0}({|z|})}{\pi },\infty <z<+\infty.
\end{equation}
{Here $K_{0}$ denotes the modified Bessel function of the second kind with order ${0}$. This density function exhibits a sharp peak at the origin and heavy tails. }
 \end{remark} 

 {Detailed proof of Theorem} \ref{lemma2}  {can be found in Appendix} \ref{App:proof}.
Theorem \ref{lemma2} {states that the NGP model can be approximated by a non-Gaussian process model $Z(\Tilde{\bm x})$. The global trend of $Z(\Tilde{\bm x})$ is the same as the posterior mean of } (\ref{nested gp}). The variance of  $Z(\Tilde{\bm x})$ involves three kinds of uncertainty: $\bm c^T_h(\Tilde{\bm x})=\frac{\partial \hat g_n}{\partial \bm h}(\hat {\bm h}^T_n(x),x')\bm s_h(x)$ is the uncertainty due to the inner GP model; $c_g(\Tilde{\bm x})=s_g(\hat {\bm h}^T_n(x),x')$ is the uncertainty   due to the outer GP model; $\bm c^T_{h,g}(\Tilde{\bm x})=\frac{\partial  s_g}{\partial \bm h}(\hat {\bm h}^T_n(x),x')\bm s_h(x)$ is the uncertainty arising from the combined effect of the inner and outer  models. In addition, from Theorem \ref{lemma2}, we have that, there is a great difference between the NGP and composite GP \cite{ba2012composite}. The composite GP model is an addition of two Gaussian Processes, where the first
one captures the smooth global trend and the second one models local details. Thus the composite GP is still a Gaussian Process. However, the NGP may no longer be a Gaussian Process.

\begin{corollary}
\label{cor-gp}
 If $\bm s_h(x)$ {converges to} $\bm 0_{1\times p}$ for all  $\Tilde{\bm x}\in \mathcal{X}$, $Z_2(\Tilde{\bm x})$ tends to be a deterministic function. In this case the model (\ref{ngp-general}) {converges to the GP model} (\ref{gp case}).
\end{corollary}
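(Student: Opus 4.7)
The plan is to exploit the algebraic structure of $Z(\Tilde{\bm x})$ in (\ref{ngp-general}) and show that, in the limit $\bm s_h(x)\to \bm 0$, the randomness in the $Z_1 Z_2$ product collapses onto a single Gaussian term whose parameters match those of the GP in (\ref{gp case}).

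First I would record the $\bm s_h$-dependence of the three coefficient functions: by definition $\bm c_h^T=\frac{\partial \hat g_n}{\partial \bm h}(\hat {\bm h}_n^T(x),x')\,\bm s_h(x)$ and $\bm c_{h,g}^T=\frac{\partial s_g}{\partial \bm h}(\hat {\bm h}_n^T(x),x')\,\bm s_h(x)$ are linear in $\bm s_h$, while $c_g(\Tilde{\bm x})=s_g(\hat {\bm h}_n^T(x),x')$ is independent of $\bm s_h$. Under the uniform boundedness hypothesis of Theorem \ref{lemma2}, this gives $\|\bm c_h\|\to 0$ and $\|\bm c_{h,g}\|\to 0$. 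Since $\mathrm{Var}[Z_2]=\sigma_2^2=\bm c_{h,g}^T\bm c_{h,g}\to 0$ while $E[Z_2]=\mu_2$, the process $Z_2$ converges in $L^2$ to the deterministic function $\mu_2(\Tilde{\bm x})$, which establishes the first half of the corollary.

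Next I would rewrite $Z=Z_1 Z_2+z_0$ in an expanded form that makes the surviving randomness transparent. Writing $Z_1=\mu_1+\xi_1$ (using $\sigma_1=1$) and $Z_2=\mu_2+\sigma_2\xi_2$ with independent $\xi_1,\xi_2\sim N(0,1)$, and substituting $z_0=\mu_Z-\mu_1\mu_2$, I obtain
\begin{equation*}
Z(\Tilde{\bm x}) \;=\; \mu_Z(\Tilde{\bm x}) \;+\; \mu_1\sigma_2\,\xi_2 \;+\; \mu_2\,\xi_1 \;+\; \sigma_2\,\xi_1\xi_2.
\end{equation*}
The central algebraic observation is the identity $\mu_1\sigma_2 = c_g$, which is immediate from $\mu_1=c_g/\sqrt{\bm c_{h,g}^T\bm c_{h,g}}$ and $\sigma_2=\sqrt{\bm c_{h,g}^T\bm c_{h,g}}$. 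This cancellation is exactly what resolves the apparent $\infty\cdot 0$ tension between the diverging $\mu_1$ and the vanishing $\sigma_2$.

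Finally I would take $\bm s_h\to\bm 0$ in the expanded expression. The two remainder terms $\mu_2\xi_1$ and $\sigma_2\xi_1\xi_2$ vanish in $L^2$ because $\mu_2=\|\bm c_h\|\to 0$ and $\sigma_2=\|\bm c_{h,g}\|\to 0$, while $\mu_1\sigma_2\,\xi_2=c_g\,\xi_2$ is unaffected by the limit. Hence $Z(\Tilde{\bm x})\to \mu_Z(\Tilde{\bm x})+c_g(\Tilde{\bm x})\,\xi_2$, which is $N(\mu_Z(\Tilde{\bm x}),\,c_g^2(\Tilde{\bm x}))=N(\mu_Z(\Tilde{\bm x}),\,s_Z^2(\Tilde{\bm x}))$, and by independence across inputs $\xi_2$ lifts to a Gaussian process with these moment functions, which is precisely the GP in (\ref{gp case}). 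The only genuinely delicate point, and the main obstacle if one tries to argue naively, is that $\mu_1$ and $\sigma_2$ have no well-defined individual limits; the algebraic identity $\mu_1\sigma_2=c_g$ is what turns the calculation into a routine limit.
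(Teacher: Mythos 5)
Your proposal is correct. The paper itself never writes out a proof of Corollary \ref{cor-gp}: it is asserted as a consequence of Theorem \ref{lemma2}, and the implicit route is via the linearized expansion $Z(\Tilde{\bm x})=\mu_Z(\Tilde{\bm x})+\bm c_h^T(\Tilde{\bm x})\bm\xi_h+c_g(\Tilde{\bm x})\xi_g+\bm c_{h,g}^T(\Tilde{\bm x})\bm\xi_h\xi_g$ appearing in the proof of Theorem \ref{lemma2} (equation (\ref{nested gp1})), where the $\bm s_h$-dependent coefficients $\bm c_h$ and $\bm c_{h,g}$ visibly vanish and only $\mu_Z+c_g\xi_g$, i.e.\ the GP (\ref{gp case}), survives. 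You instead start from the factored form (\ref{ngp-general}) and undo the factorization, which amounts to the same computation run in reverse; the two are equivalent because your expansion $Z=\mu_Z+\mu_1\sigma_2\xi_2+\mu_2\xi_1+\sigma_2\xi_1\xi_2$ is exactly (\ref{nested gp1}) with $\mu_1\sigma_2=c_g$, $\mu_2=\sqrt{\bm c_h^T\bm c_h}$, $\sigma_2=\sqrt{\bm c_{h,g}^T\bm c_{h,g}}$. What your route buys is an explicit resolution of the point the paper glosses over: since $\mu_1=c_g/\sigma_2$ diverges as $\bm s_h\to\bm 0$, a naive reading of the corollary (``replace $Z_2$ by its mean and pass to the limit'') would give the degenerate limit $\mu_Z$ with zero variance rather than variance $c_g^2=s_Z^2$; the identity $\mu_1\sigma_2=c_g$ is precisely what keeps the outer-model uncertainty $c_g\xi_2$ alive in the limit and makes the limit match (\ref{gp case}). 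Your only loose step, consistent with the paper's own level of rigor, is the final lift from pointwise Gaussian marginals to ``the GP model,'' which neither you nor the paper formalizes at the process level.
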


Corollary \ref{cor-gp} shows that the NGP model (\ref{nested gp}) {can be approximated by the  GP model} (\ref{gp case}), if  $\bm s_h(x)$ is small for all  $\Tilde{\bm x}\in \mathcal{X}$. It relaxes the condition  for an NGP model able to be approximated by a GP model in Theorem \ref{thm-1}.

{From Theorem  }\ref{thm-1} {and Theorem } \ref{lemma2}{, we can see that, the posterior mean and variance function of the NGP model depend only on the posterior mean and variance of the inner and the outer GP models. Given the fact that the computational complexity for the outer GP model is $O(n^3)$,  and for the inner GP model is $O\left((pn)^3)\right)$ }\cite{santner2018design}, the computational complexity for the NGP model is $O\left((pn)^3)\right)$.

\subsection{ Closed forms of the Nested Expected Improvement (NEI)}
\label{sec:nei}
To distinguish from the standard Bayesian optimization method, the EI function where NGP is used to approximate the nested computer experiments is called  \emph{{Nested Expected Improvement (NEI)}} function:
\begin{equation}
\label{nei}
{\rm NEI}_n(\Tilde{\bm x})=E_{f|D_n}(f_n^*-f(\Tilde{\bm x}))_{+},
\end{equation}
A new queried point ${\Tilde{\bm x}}_{n+1}$  is  selected by maximizing the  ${\rm NEI}_n(\Tilde{\bm x})$ function
\begin{equation}
\label{maxnei}
{\Tilde{\bm x}}_{n+1}= \operatorname*{argmax}_{\Tilde{\bm x}\in\mathcal{X}}{\rm NEI}_n(\Tilde{\bm x}).
\end{equation}
We can see that values of ${\rm NEI}_n$ depend on the posterior distribution ${f(\Tilde{\bm x})|D_n}$. Given two cases depending on whether NGP model can be approximated by a Gaussian process, the NEI acquisition function also has different expressions. Specifically,
\begin{itemize}
\item {If the NGP model can be approximated by the GP model} (\ref{gp case}){, denote $v(\Tilde{\bm x})=\frac{f_n^*-\mu_Z(\Tilde{\bm x})}{s_Z(\Tilde{\bm x})}$,  the NEI acquisition function has the closed-form expression:}
\begin{eqnarray}
\label{neigp}
(f_n^*-\mu_Z(\Tilde{\bm x}))
{\rm\Phi}_N\left(v(\Tilde{\bm x})\right)+ s_Z(\Tilde{\bm x})\phi_N\left(v(\Tilde{\bm x})\right).
\end{eqnarray}


\item {If the NGP model cannot be approximated by a GP model, the NEI acquisition function can be evaluated by:}
\begin{eqnarray}
\label{ngp-ei}
\small
\begin{aligned}
\int_{-\infty}^{\infty}&{\left(f_n^*-z_0(\Tilde{\bm x})-t\mu_2(\Tilde{\bm x})\right)}\phi_N \left(u_1(t,\Tilde{\bm x})\right)\Phi_N\left(u_2(f_n^*,t,\Tilde{\bm x})\right)\\
 & + |t|\sigma_2(\Tilde{\bm x}) \phi_N \left(u_1(t,\Tilde{\bm x})\right)\phi_N\left( u_2(f_n^*,t,\Tilde{\bm x})\right)dt.
\end{aligned}
\end{eqnarray}
\end{itemize}

{Detailed derivation of} (\ref{neigp}) {and} (\ref{ngp-ei}) {can be found in}  \cite{jones1998efficient} {and  Appendix}  \ref{App:proof}{, respectively.}

\begin{remark}
{The NEI acquisition function} (\ref{neigp}) {implicitly encodes
a tradeoff between exploration of the feasible region and exploitation near the current best solution. The first term in} (\ref{neigp}) {encourages exploitation, by assigning larger values for points  with smaller predicted values; the second
term in} (\ref{neigp}) {encourages exploration, by assigning greater values for points with
larger estimated posterior variance. } 
\end{remark}
\begin{remark}
 {Markov Chain Monte Carlo (MCMC) method can be used to estimate ${\rm NEI}_n$} (\ref{ngp-ei}). {Because $\phi_N\left(u_1(t,\Tilde{\bm x})\right)=0$ as $u_1(t,\Tilde{\bm x})$ tends to infinity, the interval of integration $t\in (-\infty,\infty)$ can be shrunk to $t\in [L_t(\Tilde{\bm x}),U_t(\Tilde{\bm x})]$, where  $L_t(\Tilde{\bm x})$ and $U_t(\Tilde{\bm x})$ are  pre-specified, such as $L_t=-10\sigma_1(\Tilde{\bm x})+\mu_1(\Tilde{\bm x})$ and $U_t=10\sigma_1(\Tilde{\bm x})+\mu_1(\Tilde{\bm x})$ respectively.}   
\end{remark}



\begin{remark} 
 Sampled Expected Improvement (SEI) as suggested in  \cite{chen2020finding} is  a commonly used method to {estimate} EI values when $f(\Tilde{\bm x})|D_n$ is non-Gaussian.   SEI estimates EI values based on  a large number of posterior samples of $f(\Tilde{\bm x})|D_n$ and only the prediction posterior samples that are smaller than the current best value are taken in the calculation.  Since generating posterior samples of $f(\Tilde{\bm x})|D_n$ by using the posterior density function  (\ref{exactpdf}) is  rather time-consuming, this method loses attraction.

\end{remark}


\subsection{Algorithm}
\label{sec:alg}

In this subsection, we develop the computational algorithm for nested Bayesian optimization. Algorithm \ref{alg:1}  provides detailed steps of the  NBO method.

\begin{algorithm}
\caption{Nested Bayesian optimization}
\label{alg:1}
\begin{algorithmic}[1]
    \State Obtain an initial design $\tilde{X}_{n_0}$ with $n_0$ points, and run the nested computer models at these points, yielding corresponding simulator outputs $H_{n_0},Y_{n_0}$.
    \For {iteration \texttt{$n=n_0,\cdots,N-1$}}
        \State Evaluate the current best optimal point  ${\Tilde{\bm x}}_n^*=$  \Indent $\operatorname*{argmin} Y_n$ and the corresponding function value 
        \EndIndent
        \Indent $f_n^*=\min Y_n$.
        \EndIndent
        \State Build GP models (\ref{posnestin}) and (\ref{posnest2}) to mimic the inner and 
        \Indent the outer computer models respectively.
        \EndIndent
        \State Test whether the NGP model is a GP model by using 
             \Indent a cross-validation method.
        \EndIndent
            \If {NGP model is Gaussian}
                \State Identify the maximizer $\Tilde{\bm x}_{n+1}$ of ${\rm NEI}_n$ (\ref{neigp}).
            \Else
                \State Identify the maximizer $\Tilde{\bm x}_{n+1}$ of ${\rm NEI}_n$ (\ref{ngp-ei}).
            \EndIf  
        \State Run the nested  computer models  at $\Tilde{\bm x}_{n+1}$, augment 
        \Indent  $\tilde{X}_n$, $H_n$ and $Y_n$ with $\Tilde{\bm x}_{n+1}$, $h(x_{n+1})$ and $f( \Tilde{\bm x}_{n+1})$.
        \EndIndent
    \EndFor
    \State \textbf{Return} the current best optimal point ${\Tilde{\bm x}}_N^*=  \operatorname*{argmin} Y_N$ and the corresponding function value $f_N^*=\min Y_N$. 
    \end{algorithmic}
\end{algorithm}

We can explain this algorithm as follows. Firstly, initial data is collected based on
a  maximin Latin hypercube design. Here, the number of initial points $n_0$ is set at $10d$, as recommended in \cite{Loeppky2009Special}.
 Next, Gaussian Process models  are built to mimic the inner model and the outer model by using  (\ref{posnestin}) and (\ref{posnest2}). 
Then, $K$-fold cross-validation method is used to exam whether the NGP is a GP or not. More specifically, build GP model (\ref{gp case}) to approximate the nested computer outputs and then examine the prediction accuracy of this GP model by $K$-fold cross-validation method. Here, choice of $K$ follows the criterion below \cite{jung2018multiple}
$$K\approx \log(n) \textit{ and } n/K>3d.$$
Finally, query the sequential points by maximizing (\ref{neigp})  (when NGP is Gaussian) or by maximizing  (\ref{ngp-ei}) (when NGP is non-Gaussian), until the sample size budget $N$ is reached.


\section{Numerical studies}
\label{sec.num}

{
In this section, we compare the  proposed NEI method with five benchmark methods. The five benchmark methods include (1) EI-GP: the Expected Improvement (EI) method under the one-GP model; (2) LCB-GP: the Lower Confidence Bound (LCB) method under the one-GP model; (3) LCB-NGP: the Lower Confidence Bound (LCB) method under the NGP model; } 
{ (4) EQI-GP: the Expected Quantile Improvement (EQI) method under the one-GP model; and (5) EQI-NGP: the Expected Quantile Improvement (EQI) method under the NGP model. }  {The tuning parameter for the LCB function is selected following }\cite{srinivas2010gaussian,brochu2010tutorial}.  

{
The simulation set-up is as follows.
We generate the inputs $\tilde X_{n_0}$,where $n_0=10d$, according to a maximin Latin hypercube design via the R package \emph{maximinLHS}. Then, we collect the inner computer model outputs  $H_{n_0}$,  and the outer computer model outputs $Y_{n_0}$ on $H_{n_0}$ and $\tilde X_{n_0}$ . 

To obtain the NGP predictor, two GP models are built to mimic the inner and outer computer models, respectively.  Here, the GP models are fitted using the R package \emph{DiceKriging}
} \cite{roustant2015package}.

{The log-optimality gap is used to compare the performance of different methods, which is defined as} 
$$ log_{10}(f_n^*-f^*).$$
{All results about the log-optimality gap are averaged over 50 replications.}

\subsection{ 1-d GP model}
\label{1dgp}
Suppose the inner computer model and the outer computer model are both commonly used one-dimension test functions in the literature on GP models \cite{santner2018design}:
   \begin{equation}
   \begin{aligned}
 h( x)&=\exp(-1.4x)\cos(7\pi x/2)-1.4x, x\in[0,1],\\\nonumber
 g( h)&=h\sin(\pi h/2).
 \end{aligned}
  \end{equation}
The global minimum of $f(x)=g(h({ x}))$ is at $ x^*=0.124$ and the corresponding function value  is $0$.

{By choosing the Gaussian correlation functions }(\ref{exp}) { as the correlation functions}, two GP models are built to mimic the inner and outer computer models, respectively. {To illustrate the reasons we the choose Gaussian correlation functions, a detailed comparison of the model accuracy between the one-GP model and the NGP model under different correlation functions is given in Appendix} \ref{rbfs}. Fig. \ref{Fig. pred} shows  predictors and 95\% confidence intervals given by these two GP models. 

\begin{figure}[!ht]
\centering
\includegraphics[width=\columnwidth]{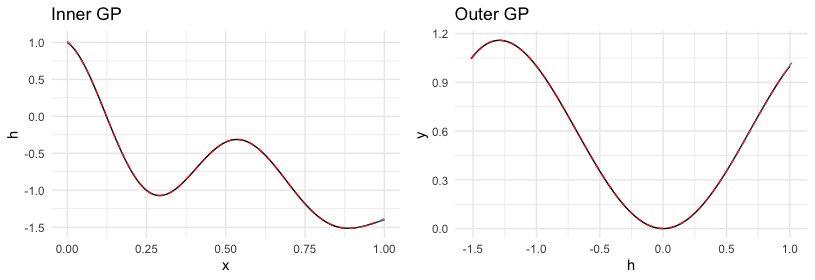}
\caption{{The true inner and outer computer models (black real lines) v.s. GP predictions (red dotted lines) and 95\% confidence intervals (blue intervals) of the inner computer model (left) and outer computer model (right).}}
\label{Fig. pred}
\end{figure}

From Fig. \ref{Fig. pred}, we can find that the inner and outer computer models can be approximated by GP models perfectly. Moreover, 95\% confidence intervals of the inner GP predictor show that $s_h( x)$ is almost zero for all $x\in [0,1]$. Therefore, $f(\Tilde{\bm x})$ can be approximated by a GP model. To further verify this conclusion, a Gaussianity test is then conducted. 

 By the $3$-fold cross-validation (CV) method, we have that, the NGP model is a GP model. Fig. \ref{Figure ngp} compares the performance of the one-GP build by using $(\tilde X_{n_0},Y_{n_0})$ and the NGP model approximated by a composite GP model. It can be seen that, both mean functions of the one-GP model and the NGP model  match the true function accurately, but the 95\% confidence intervals indicate that, the NGP predictor has smaller variance than the one-GP predictor.

 {The reason for this result is that, $f$ is a realization from a  non-stationary GP. Compared to the stationary one-GP model, the NGP model can approximate $f$ more accurately and can also improve
the prediction intervals, especially  when the
experimental design is sparse} \cite{ba2012composite}.

\begin{figure}[!ht]
\centering
\includegraphics[width=\columnwidth]{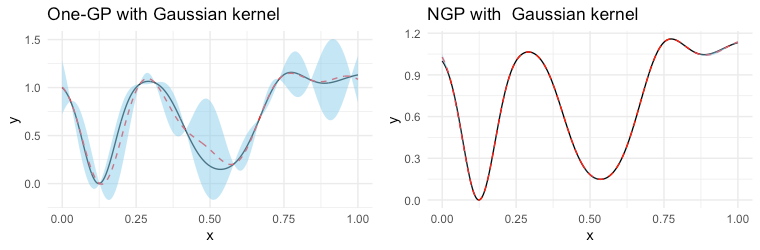}
\caption{{Left:  predictions (red dotted line) and 95\% confidence intervals  of the one-GP model build by using $(\tilde X_{n_0},Y_{n_0})$, with $n_0=10$;  Right: predictions (red dotted line) and 95\% confidence intervals of the NGP model.}}
\label{Figure ngp}
\end{figure}

 Fig.  \ref{Figure 1} shows the log-optimality gap against the
number of samples for the six methods.

\begin{figure}[!ht]
\centering
\includegraphics[width=\columnwidth]{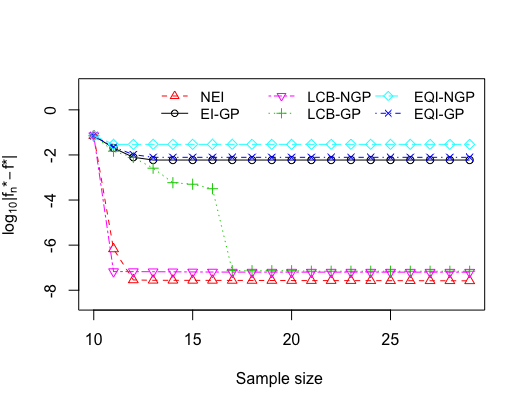}
\caption{Average optimality gap over $50$ replications by different methods.}
\label{Figure 1}
\end{figure}

{From Fig.} \ref{Figure 1}{, we can see that, the optimality gaps for NEI, LCB-NGP and LCB-GP  enjoy steady improvements as $n$ increases, whereas the optimality gap for the other methods stagnates for larger sample sizes. {The proposed method outperforms other methods.} The NGP-based approaches outperform the one GP-based approaches under the same acquisition function. This is a very direct result of the more accurate predictions for the NGP model. 
}


\subsection{ 1-d non-GP model}
\label{1dngp}
Suppose the inner computer model is
   \begin{equation}
 h( x)= (1+|x|)^{-4}, x\in[-1,1],\nonumber
  \end{equation}
  and the outer computer model is
     \begin{equation}
 g( h)=h\sin(7\pi h/2).\nonumber
  \end{equation}
 The global minimum of  this nested computer experiment  is  $ (0,-1)$.  Fig.  \ref{Figure levy ngp} compares the performance of one GP model and NGP model with $n_0=10$. 
 
\begin{figure}[!ht]
\centering
\includegraphics[width=\hsize]{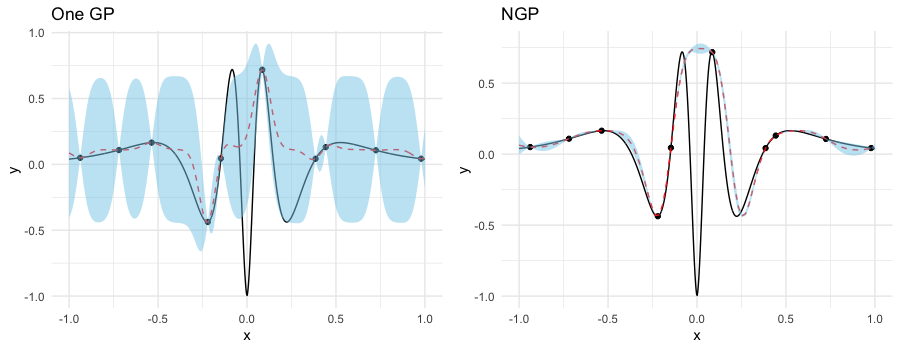}
\caption{{Left:  predictions (red dotted line) and 95\% confidence intervals  of the one-GP model build by using $(\tilde X_{n_0},Y_{n_0})$, with $n_0=10$;  Right: predictions (red dotted line) and 95\% confidence intervals of the NGP model.}}
 \label{Figure levy ngp}
\end{figure}

Fig.  \ref{Figure levy ngp} shows that both the one-GP model and the NGP model perform poor in $x\in [-0.1,0.1]$. The reason is that, values of the true function change fast in  $x\in [-0.1,0.1]$, but the design is sparse in $ [-0.1,0.1]$. Except  at the points that belong to $ [-0.1,0.1]$, the NGP model outperforms  the one-GP model.

Via the $3$-fold CV test, we can find that the NGP model is not Gaussian. Therefore, in the NBO algorithm, the sequential point  is collected by maximizing ${\rm NEI}_n$ (\ref{ngp-ei}). Set $L_t=-10\sigma_1(x)+\mu_1( x)$ and $U_t=10\sigma_1( x)+\mu_1(x)$, MCMC method is used to evaluate  (\ref{ngp-ei}) and the EQI function. { The log-optimality gaps against the number of samples for the six methods are shown in Fig.} \ref{Figure levy}. 
\begin{figure}[!ht]
\centering
\includegraphics[width=\columnwidth]{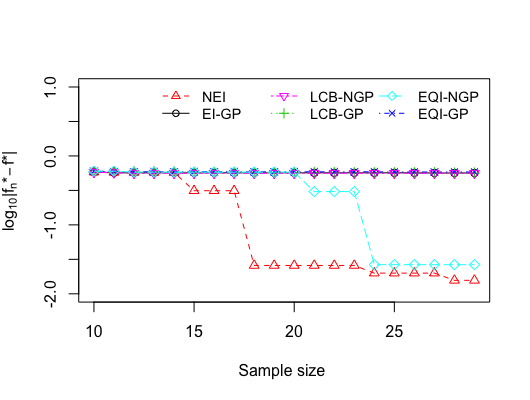}
\caption{Average optimality gap over $50$ replications by different methods.}
\label{Figure levy}
\end{figure}

{ From Fig.} \ref{Figure levy}{, {we can conclude that the optimality gaps for NEI and
EQI-NGP enjoy steady improvements as $n$
increases.}   However, the other methods fall into a local optimal point, which is included in the initial design. 
This
shows that the proposed method balances the optimal point of the fitted model with the exploration of other regions.

It is worth noting that, since the LCB depends only on the posterior mean and variance of $f(\tilde {\bm x})$, this  acquisition function lose its advantage when the posterior distribution of  $f(\tilde {\bm x})$ is non-Gaussian. }

\subsection{ 4-d GP model}
Suppose the inner computer model includes two functions: the three-hump camel function

\begin{equation}
h_1({\bm x})=2x_1^2-1.05x_1^4+x_1^6/6+x_1x_2+x_2^2, \nonumber
\end{equation}

and the six-hump camel function
 
\begin{align*}
h_2({\bm x})=(4-2.1x_3^2+x_3^4/3)x^2_3+x_3x_4+(-4+4x_4^2)x_4^2,
\end{align*}
Here, ${\bm x}=(x_1,x_2,x_3,x_4)\in [-1,1]^4$.
Suppose the outer computer model is the Branin function

\begin{equation}
g(\bm h)=\frac{1}{51.95}\left[ g_1(\bm h)+(10-\frac{10}{8\pi})\cos(\bar h_1)-44.81\right],\nonumber
\end{equation}
where $g_1(\bm h)=(\bar h_2-\frac{5.1\bar h_1^2}{4\pi^2}+\frac{5\bar h_1}{\pi}-6)^2$, $\bar h_1=5(h_1-1)$, $\bar h_2=5(h_2+1)$.
The global minimum of $f=g(\bm h({\bm x}))$ is at ${\bm x}^*=(-0.121, 0.547, 0.915, 0.715)$ and the corresponding function value  is $-16.644$.
Let $n_0=40$, we still use the maximin Latin hypercube design to collect data. Then we build GP models for inner and outer computer models. Via the $3$-fold  CV test, we have that the NGP model is Gaussian.

\begin{figure}[!ht]
\centering
\includegraphics[width=\columnwidth]{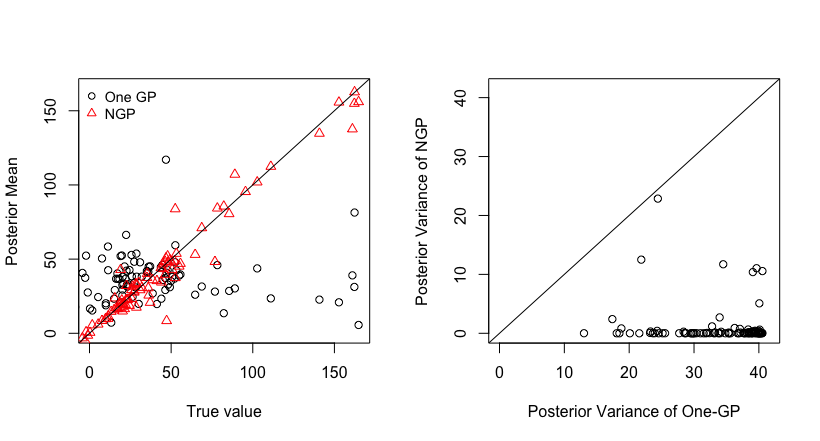}
\caption{Left: Posterior mean of the one-GP (black circles) and NGP (red triangles);  Right: Posterior variance of the one-GP  and NGP.}
\label{Figure branin ngp}
\end{figure}

Fig.  \ref{Figure branin ngp} compares the prediction performance of one GP model and NGP model at $100$ un-observed  locations. These $100$ testing locations are sampled by  the  maximin Latin hypercube design. Left of  Fig.  \ref{Figure branin ngp} shows the comparison between  predictions of  different models and the true  outputs of the nested computer experiment. We see that, predictions given by  the NGP model at these testing locations are much closer to the true values.   The $100$ points (black circles) in Fig.  \ref{Figure branin ngp} right compare the posterior variances given by the one-GP model and the NGP model. Because all $100$ points are under the line ``$y=x$", it indicates that posterior variances given by 
the NGP model are smaller than  posterior variances given by the one-GP model.

Fig. \ref{Figure bias} shows the log-optimality gap $log_{10}(f_n^*-f^*)$ against the number of samples $n$. Results of the log-optimality gap are averaged over 50 replications.

\begin{figure}[!ht]
\centering
\includegraphics[width=\columnwidth]{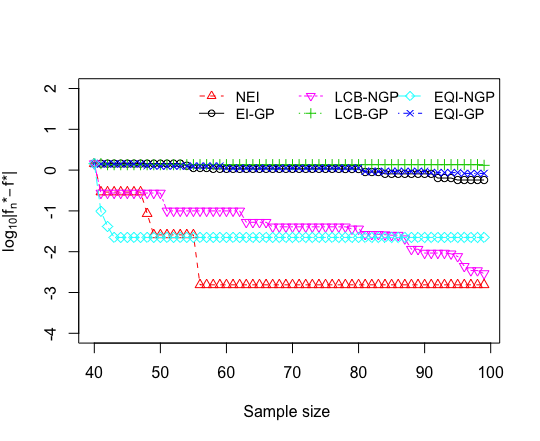}
\caption{Average optimality gap over 50 replications by different methods.}
\label{Figure bias}
\end{figure}

{We can see from Fig. } \ref{Figure bias}{ that the proposed method outperforms other methods: the optimality gap for the latter methods stagnates for larger sample sizes, whereas the former enjoys steady improvements as $n$ increases.} 

In summary, results of the numerical simulations show that {the proposed NBO method has three advantages}: (i) it incorporates the nested structure information and makes full use of the inner computer model outputs; (ii) it improves the prediction accuracy significantly; (iii) it avoids the convergence to local minimum and identifies the global optimum more efficiently.

\section{Case Study via Composite Structures Assembly}
 \label{case}
 {Composite structures have become increasingly used in many major products (e.g., fuselages, wings, car bodies, solar panels, spacecraft) due to their superior characteristics} including high strength-to-weight ratio, high stiffness-to-weight ratio, potential long life, and low life-cycle cost. However, fabrication deviations are inevitable in composite structures. It is timely important to address the quality control in composite structures assembly. 
 
 \begin{figure}[h!]
\centering
\includegraphics[width=\columnwidth]{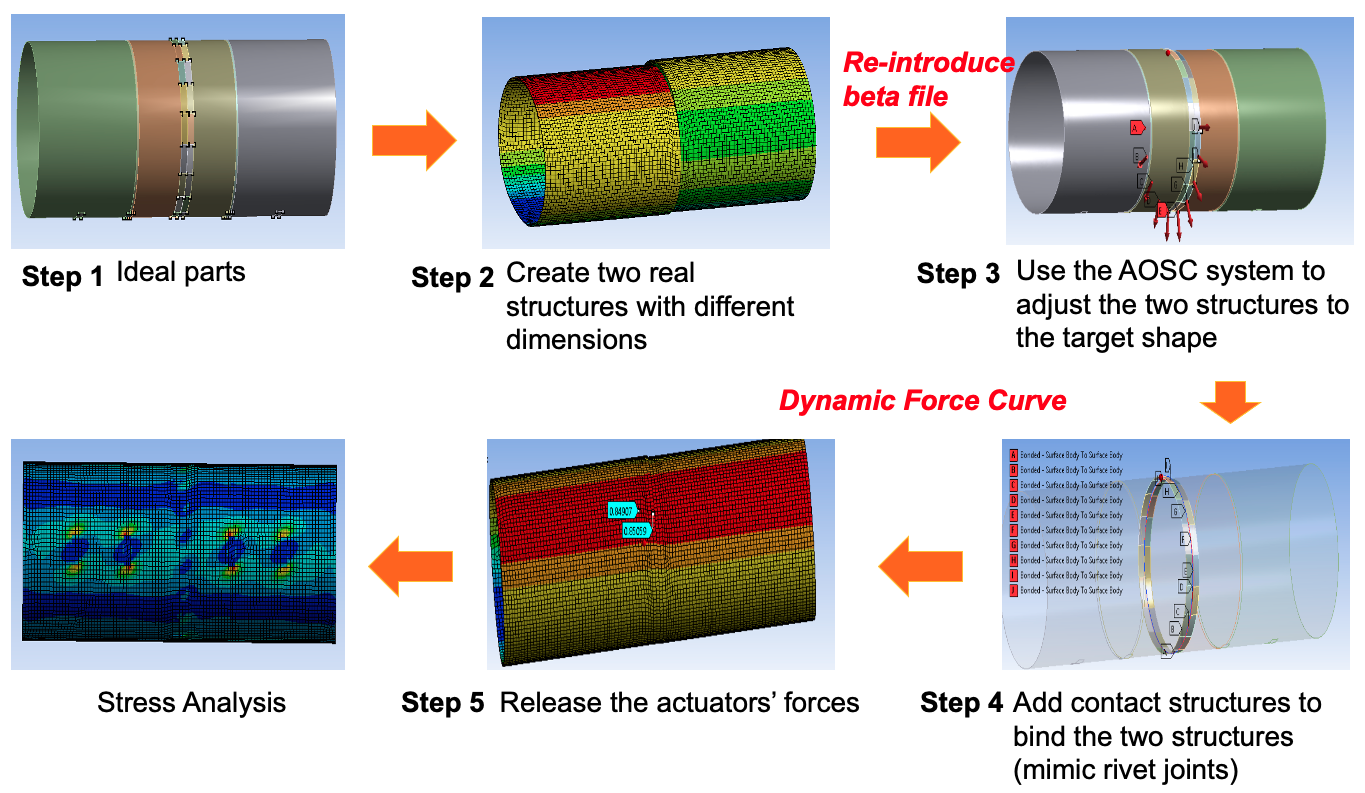}
\caption{The computer experiment mimics the composite structures assembly process.}
\label{Figure flow}
\end{figure}
 
 One digital twin simulation platform for composite structures assembly was developed to mimic the fabrication process of carbon-fiber reinforced composites \cite{wen2018feasibility,wen2019virtual}. This computer simulation platform was built based on ANSYS PrepPost Composites workbench, and it was calibrated and validated via physical experiments. The calibration process refers to \cite{wang2020effective}. The digital twin simulation can conduct virtual assembly to illustrate detailed composite structures joint. As shown in Fig. \ref{Figure flow}, the virtual assembly simulation includes multiple steps: (i) generate composite structures with deviations, (ii) apply Automatic Optimal Shape Control technique \cite{yue2018surrogate} to adjust the dimensions; (iii) add revit joins and then release actuators' forces; (iv) do dimensional analysis and stress analysis. 
 
 This multistep computer simulation for composite structure assembly has nested structure. As shown in Fig. \ref{Figure nest}, the inner computer model simulates the shape control of a single composite structure. The automatic optimal shape control can adjust the dimensional deviations of one composite fuselage and make it align well with the other fuselage to be assembled. The outer computer model simulates the process of composite structures assembly, where the inputs are critical dimensions from two parts, and the outputs are internal stress after assembly. Table     \ref{tab:com1} summarizes the inputs and outputs information in computer experiments. We will conduct nested Bayesian optimization for this nested computer experiment to identify the optimal assembly that can minimize the residual stress after assembly.

\begin{figure}[!ht]
\centering
\includegraphics[width=\columnwidth]{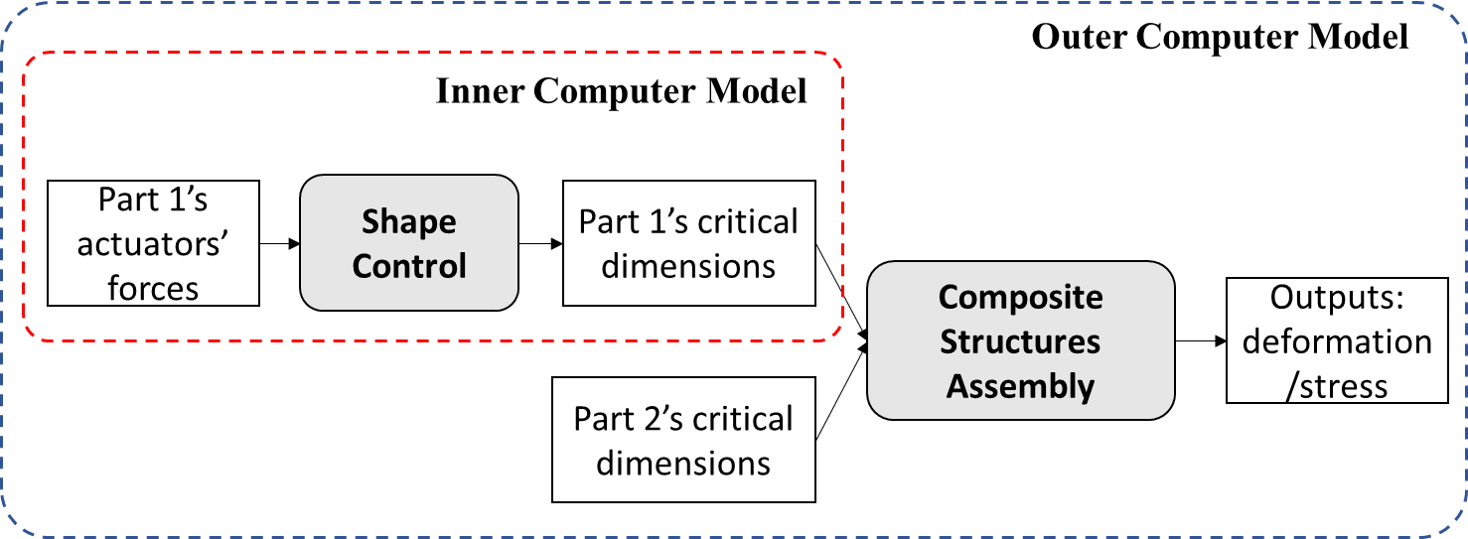}
\caption{Nested computer experiments in composite structures assembly.}
\label{Figure nest}
\end{figure}
\begin{table*}[!ht]
    \centering
    \caption{Inputs and outputs for the nested computer experiments}
    \label{tab:com1}
    \setlength{\tabcolsep}{5pt}
    \renewcommand{\arraystretch}{2}
 {  \begin{tabular}{lccccccc}
        \hline
        Inner computer model\\
          &Name of variable& Dimension &Range of values\\
          \hline
      Inputs&  Part 1’s actuators’ forces (${ x}$)  & $10$  &${ x}_{i}\in (-250,250), i=1,\ldots,10$ \\
        Outputs&  Part 1’s critical dimensions  ($\bm h({ x})$) & $5$  \\
               \hline
              Outer computer model\\
                     &Name of variable& Dimension &\\
          \hline
            Inputs&         Part 1’s critical dimensions  ($\bm h({ x})$) & $5$& \\
            &     Part 2’s critical dimensions (${ x'}$)  & $5$& \\
            Outputs&  Mean of  Stress & $1$  \\
                 \hline
    \end{tabular}}
   \end{table*}

Let $n_0=100$,  we collect the inner computer model outputs  $H_{n_0}$ on a  maximin Latin hypercube design $X_{n_0}$, and the outer computer model outputs $Y_{n_0}$ on $(H_{n_0},X_{n_0})$. 
 We conduct the $2$-fold CV test and find that the NGP model is non-Gaussian. We split  the initial data into 70\% as training and 30\% as a testing set randomly, and use the training data to build the GP and NGP models. The testing data is used to compare the prediction accuracy of different models. 

\begin{figure}[!ht]
\centering
\includegraphics[width=\columnwidth]{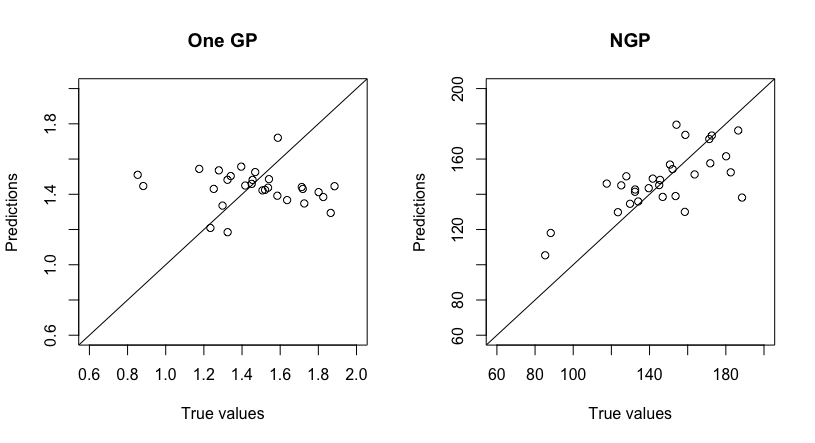}
\caption{Predictions given by the GP (left) and NGP (right) v.s. the true outputs;.}
\label{Figure pred}
\end{figure}

Fig. \ref{Figure pred} shows that the NGP model outperforms the one-GP model. 
Because the dimension of the inputs is $15$, it is time-consuming to search the optimal point of EI and NEI function in Bayesian optimization. Following \cite{chen2020rbfbo}, instead of directly optimize the acquisition functions  over $\mathcal{X}$, we choose a set of candidate point $\mathcal{X}_{cand}$ from the whole search domain and then find the next point in $\mathcal{X}_{cand}$.
In this work, we  select $\mathcal{X}_{cand}$ on a maximin Latin hypercube design and the sample size of $\mathcal{X}_{cand}$ is set to be $1000$. { Let $N=200$,  Fig.}\ref{Figure case opt} {shows the optimal results given by  different methods.}
\begin{figure}[!ht]
\centering
\includegraphics[width=\columnwidth]{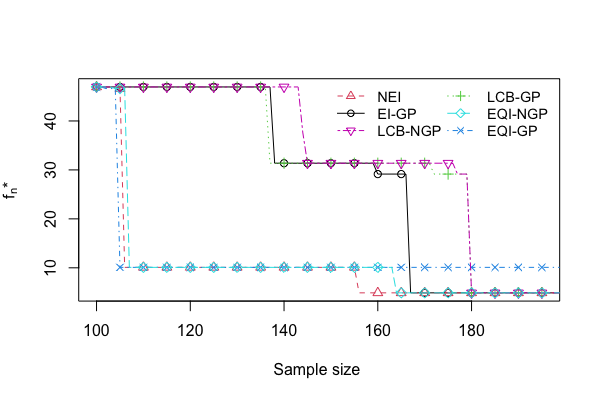}
\caption{The optimal results given by  different methods.}
\label{Figure case opt}
\end{figure}
{From Fig.}\ref{Figure case opt}{, we have that except for the EQI method under one-GP model, the others obtain the same minimum of residual stress with  4.885 psi (pound per square inch). Moreover, the proposed method identifies this residual stress with a minimum number of sequential points, which indicates the high effectiveness of the proposed method.}



\section{{Summary and Discussions}}
\label{sec.dis}

Computer experiments and digital twins have ubiquitous influence on engineering systems. Since the multi-step simulations or hierarchical structure of systems, many computer experiments have nested structures. This paper proposed a novel Bayesian optimization {method} for nested computer experiments. We first derived the nested Gaussian process models to serve as surrogates for the computer models. We proved the distribution of nested outputs given it is Gaussian or non-Gaussian. We also deduced the closed forms of nested expected improvement, and proposed one new algorithm for nested Bayesian optimization. The proposed NBO method can make full use of the nested structure and intermediate outputs to identify the global optimum efficiently. It avoids convergence to the local optimum which may occur in standard Bayesian optimization. We validated the performance of NBO based on three numerical studies and one case study. In the case study, the proposed NBO can minimize the residual stress for composite structures assembly, and achieve a much better result than the conventional Bayesian optimization {methods}. 

{The proposed method may be faced with generalizability challenge when the system has multiple connected models. Specifically, approximating the multiple nested computer models by a suitable surrogate model needs to estimate more hyperparameters. More training samples will be required for accurate parameter learning. High-dimensionality of parameters may result in high computational cost of Bayesian optimization. Furthermore, the fitting multiple connected computer models by a nested GP may have non-identifiability issue. In future research, we will investigate the identifiability conditions and new nested Bayesian optimization {methods} for complex multiple connected systems.} 



\section*{Acknowledgment}

 Dr. Wang's research was supported by the National Natural Science 
Foundation of China (12101024), the Natural Science Foundation
of Beijing Municipality (1214019).

\bibliographystyle{IEEEtran}
\bibliography{Ref.bib}

\clearpage 

\appendices
\section{Gaussian Process models}
\label{gp model}
In this section, we introduce GP models to mimic the inner and outer computer outputs. Suppose
\begin{equation}
\begin{aligned}
h_k(\cdot)&=\mu_{h_k}(\cdot)+Z_{h_k}(\cdot), k=1,\ldots,p,\\
  \mu_{h_k}(\cdot)&=\bm b_{h_k}^T(\cdot)\bm\beta_{h_k},   Z_{h_k}(\cdot)\sim GP(\bm {0}, \sigma_{h_k}^2\Phi_{h_k});\\
g(\cdot)&=\mu_g(\cdot)+Z_g(\cdot), \\
 \mu_g(\cdot)&=\bm b_g^T(\cdot)\bm\beta_g,  Z_g(\cdot)\sim GP(0,\sigma_g^2\Phi_g).
\end{aligned}
\end{equation}
For the $k$th output of inner computer model, $\bm b_{h_k}(\cdot)$ consists of $q_{h_k}$ basis functions for the mean
function $\mu_{h_k}$; $\bm\beta_{h_k}$ denotes its corresponding coefficients, and $GP(0,\sigma_{h_k}^2\Phi_{h_k})$ denotes a stationary Gaussian Process with mean zero, variance $\sigma_{h_k}^2 $ and correlation function $\Phi_{h_k}(\cdot)$. 
For the outer computer model, $g(\cdot)\sim  GP(\bm b_g^T(\cdot)\bm\beta_g,\sigma_g^2\Phi_g)$, where $\bm b_g$ and $\bm\beta_g$ are $q_g\times 1$ vectors; $\sigma_g^2$ is the process variance and  $\Phi_g(\cdot)$ is the correlation function.
Common choices of $\Phi_{h_k}(\hbar)$ and $\Phi_g(\hbar)$ include the Gaussian correlation functions 
 \begin{eqnarray}
\exp(-\theta \hbar^2),
\label{exp}
\end{eqnarray}
and the Mat\'ern correlation functions with
 \begin{eqnarray}
\frac{1}{\Gamma(\nu)}\left(\frac{2\sqrt{\nu}\hbar}{\theta}\right)^{\nu}K_{\nu}\left(\frac{2\sqrt{\nu}\hbar}{\theta}\right),
\label{matern}
\end{eqnarray}
where $\hbar\geq 0$ is a distance between two inputs of the GP model. $\theta>0$ is the correlation parameter and  $K_{\nu}$ denotes the modified Bessel function of the second kind with order ${\nu}$.

Denote ${\rm{ {\Phi}}}_{h_k}=(\Phi_{h_k}(\parallel x_{i}-x_{j}\parallel))_{i,j=1}^n$;  ${\bm\phi}_{h_k}(x)=(\Phi_{h_k}(\parallel x-x_{i}\parallel))_{i=1}^n$ and $\mathbf{B}_{h_k}=\left(b_{h_k}( x_{1}),\ldots,b_{h_k}(x_{n})\right)^T$. The posterior distribution of
$h_k(\cdot)$ at an unobserved input $x$ has the closed form \cite{santner2018design}:
\begin{equation}
\label{posnesth}
h_k(x)|H_n, \tilde X_n\sim N(\hat { h}_{k,n}(x),s_{h_k}^2( x)).
\end{equation}
Here, the posterior mean is
\begin{equation}
\label{blue}
\hat {h}_{k,n}( x)=\bm b_{h_k}^T(x)\hat{\bm\beta}_{h_k}+{\bm\phi}_{h_k}^T( x){\rm{\Phi}}_h^{-1}(H_{k,n}-\mathbf{B}_{h_k}\hat{\bm\beta}_{h_k}),
\end{equation}
where $\hat{\bm\beta}_{h_k}=(\mathbf{B}^T_{h_k}{\rm{\Phi}}_{h_k}^{-1}\mathbf{B}_{h_k})^{-1}\mathbf{B}^T_{h_k}{\rm{\Phi}}_{h_k}^{-1}H_{k,n}$, 
and the posterior variance is

\begin{eqnarray}
\begin{aligned}
\label{var}
s^2_{h_k}(x)&=\sigma_{h_k}^2 \left\{\Phi_{h_k}( x, x)-{\bm\phi}_{h_k}^T( x){\rm{\Phi}}_{h_k}^{-1}{\bm\phi}_{h_k}( x)\right\}\\
&+\sigma_{h_k}^2U^T_{h_k}(\Tilde{\bm x})(\mathbf{B}^T_{h_k}{\rm{\Phi}}_{h_k}^{-1}\mathbf{B}_{h_k})^{-1}U_{h_k}(\Tilde{\bm x}),
\end{aligned}
\end{eqnarray}
where $U_{h_k}(x)=\bm b_{h_k}(x)-\mathbf{B}^T_{h_k}{\rm{\Phi}}_{h_k}^{-1}{\bm\phi}_{h_k}(x)$. Formulations of the posterior mean and posterior variance function of $g(\cdot)|D_n$ are the same as (\ref{blue})  and (\ref{var}), respectively.  In addition, the process variance $\sigma_{h_k}^2$ and the hyper-parameter $\theta$ in the correlation function are always unknown in practice, maximum likelihood estimators (MLEs)  can be plugged into (\ref{posnesth}) to obtain the posterior distribution of $h_k$.


\section{Technical Proofs}\label{App:proof}

\begin{proof}[Proof of Theorem \ref{lemma2}]

Stochastic Taylor expansion of (\ref{nested gp}) shows that
 \begin{equation}
 \begin{aligned}
 f(\Tilde{\bm x})|D_n=Z(\Tilde{\bm x})+Re(\Tilde{\bm x}),
 \end{aligned}
\end{equation}
where
\begin{equation}
\small
\begin{aligned}
\label{nested gp1}
Z(\Tilde{\bm x})
&= \mu_{Z}(\Tilde{\bm x}) +\bm c^T_h(\Tilde{\bm x})\bm \xi_h+ c_g(\Tilde{\bm x})\xi_g+\bm c^T_{h,g}(\Tilde{\bm x})\bm \xi_h \xi_g,\\
\end{aligned}
\end{equation}
with $\mu_{Z}(\Tilde{\bm x})=\hat g_n(\hat {\bm h}^T_n(x),x')$  the global trend of $Z(\Tilde{\bm x})$; $\bm c^T_h(\Tilde{\bm x})=\frac{\partial \hat g_n}{\partial \bm h}(\hat {\bm h}^T_n(x),x')\bm s_h(x)$ the uncertainty in $Z(\Tilde{\bm x})$  due to the inner GP model; $c_g(\Tilde{\bm x})=s_g(\hat {\bm h}^T_n(x),x')$ the uncertainty in $Z(\Tilde{\bm x})$  due to the outer GP model; $\bm c^T_{h,g}(\Tilde{\bm x})=\frac{\partial  s_g}{\partial \bm h}(\hat {\bm h}^T_n(x),x')\bm s_h(x)$ the uncertainty arising from the combined effect of the inner and outer  models.

 $Re(\Tilde{\bm x})$ is the Lagrange remainder, which presents the approximation error between $f(\Tilde{\bm x})|D_n$ and $Z(\Tilde{\bm x})$.
From Corollary 2 in \cite{yang2021note}, we have that, by assuming   the second order derivatives of $\hat g_n$ and $s_g$ with respect to $\bm h$  are uniformly bounded, the Lagrange's error bound is $Re(\Tilde{\bm x})=O\left(\left[\sum_{k=1}^p [\bm s_h(x)\bm\xi_h]_k\right]^2\right)$,
 {where $[\bm s_h(x)\bm\xi_h]_k$ is the $k$th element of $\bm s_h(x)\xi_h$.
  Proposition 3.2 in} \cite{wang2021inference} {shows that $\lim_{n\rightarrow \infty}\sup_x \bm s_{h}(x)\bm\xi_h$ converges to $\bm 0$ in probability, and thus the distribution of $Z(\Tilde{\bm x})$ converges to the distribution of $f(\Tilde{\bm x})|D_n$. }
As a result,  the nested computer models $f(\Tilde{\bm x})|D_n$ can be approximated by $Z(\Tilde{\bm x})$.

Next, we focus on the distribution of $Z(\Tilde{\bm x})$.
 Denote $\eta_h$ to be a standard normal random variable, where the subscript  $h$ indicates that this randomness is caused by the inner GP model. {Let $s^2_1(\Tilde{\bm x})={\bm c^T_{h}(\Tilde{\bm x})\bm c_{h}(\Tilde{\bm x})}$, $s^2_2(\Tilde{\bm x})={\bm c^T_{h,g}(\Tilde{\bm x})\bm c_{h,g}(\Tilde{\bm x})}$. }
{ By some numerical calculations, it is easily verified that
$Z(\Tilde{\bm x})$ can be represented as }
\begin{align*}
Z(\Tilde{\bm x}) &= \left\{\eta_h+\frac{c_g(\Tilde{\bm x})}{s_2(\Tilde{\bm x})}\right\} \times \left\{s_2(\Tilde{\bm x})\xi_g+s_1(\Tilde{\bm x})\right\} \\ 
&+  \left[\mu_{Z}(\Tilde{\bm x})-c_g(\Tilde{\bm x})\frac{s_1(\Tilde{\bm x})}{s_2(\Tilde{\bm x})}\right].\\
&=Z_1(\Tilde{\bm x})Z_2(\Tilde{\bm x})+z_0(\Tilde{\bm x}).
\end{align*}

 {Then  for fixed $\Tilde{\bm x}\in\mathcal {X}$, $Z_1(\Tilde{\bm x})-z_0(\Tilde{\bm x})$ is a random variable generated by a production of two normal variables  $Z_1(\Tilde{\bm x})$ and $Z_2(\Tilde{\bm x})$, with $Z_1(\Tilde{\bm x})\sim N(\mu_1(\Tilde{\bm x}),\sigma^2_1(\Tilde{\bm x}))$ and $Z_2(\Tilde{\bm x})\sim N(\mu_2(\Tilde{\bm x}),\sigma^2_2(\Tilde{\bm x}))$. Here, $\mu_1(\Tilde{\bm x})={c_g(\Tilde{\bm x})}/\sqrt{\bm c^T_{h,g}(\Tilde{\bm x})\bm c_{h,g}(\Tilde{\bm x})}$; $\mu_2(\Tilde{\bm x})=\sqrt{\bm c^T_{h}(\Tilde{\bm x})\bm c_{h}(\Tilde{\bm x})}$; $\sigma^2_1(\Tilde{\bm x})=1$; $\sigma^2_2(\Tilde{\bm x})={\bm c^T_{h,g}(\Tilde{\bm x})\bm c_{h,g}(\Tilde{\bm x})}$.}

The exact probability density function of $Z(\Tilde{\bm x})$ can be 
computed as \cite{2016Exact}:
\begin{equation}
\label{exactpdf}
\begin{aligned}
p_Z(z)=\frac{\int_{-\infty}^{\infty}\frac{1}{|t|}\exp\left\{-\frac{u_1^2(t,\Tilde{\bm x})+u^2_2(z,t,\Tilde{\bm x})}{2}\right\}dt}{2\pi\sigma_1( \Tilde{\bm x})\sigma_2(\Tilde{\bm x})},
 \end{aligned}
\end{equation}
where $u_1(t,\Tilde{\bm x})=\frac{t-\mu_1(\Tilde{\bm x})}{\sigma_1(\Tilde{\bm x})}$, $u_2(z,t,\Tilde{\bm x})={\frac{z+z_0(\Tilde{\bm x})-t\mu_2(\Tilde{\bm x})}{|t|\sigma_2(\Tilde{\bm x})}}$.
The cumulative density function of $Z(\Tilde{\bm x})$ is 

\begin{equation}
\label{cdf.2}
\begin{aligned}
P_Z(z)=\int_{-\infty}^{\infty}\frac{1}{\sigma_1(\Tilde{\bm x})}\phi_N\left(u_1(t,\Tilde{\bm x})\right)\Phi_N{\left(u_2(z,t,\Tilde{\bm x})\right)}d t,
\end{aligned}
\end{equation}
where ${\rm\Phi}_N$ is the cumulative distribution function of the standard normal distribution and $\phi_N$ is the probability density function.

Because $Z_1(\Tilde{\bm x})$ is independent from $Z_2(\Tilde{\bm x})$, mean function and variance function of $Z(\Tilde{\bm x})$ can be easily deduced:
\begin{equation}\nonumber
\small
\begin{aligned}
{\rm E}[Z(\Tilde{\bm x})]&={\rm E}[Z_1(\Tilde{\bm x})]\times {\rm E}[Z_2(\Tilde{\bm x})]+z_0(\Tilde{\bm x}),\\\nonumber
&=\mu_1(\Tilde{\bm x})\mu_2(\Tilde{\bm x})+z_0(\Tilde{\bm x}).\\\nonumber
{\rm Var}[Z(\Tilde{\bm x})]
&={\rm E}[Z^2_1(\Tilde{\bm x})]\times {\rm E}[ Z^2_2(\Tilde{\bm x})]-\mu^2_1(\Tilde{\bm x})\mu^2_2(\Tilde{\bm x}),\\
&=\left[\sigma_1^2(\Tilde{\bm x})+\mu^2_1(\Tilde{\bm x})\right] \left[\sigma_2^2(\Tilde{\bm x})+\mu^2_2(\Tilde{\bm x})\right]-\{\mu^2_1(\Tilde{\bm x})\mu^2_2(\Tilde{\bm x})\},\nonumber
\end{aligned}
\end{equation}
which implies the desired results.
\end{proof}

\begin{proof}[{Derivation of the NEI acquisition function}  (\ref{ngp-ei})]
Based on the definition of NEI function (\ref{nei}), we have
\begin{equation}
\label{nei.0}
\begin{aligned}
{\rm NEI}_n(\Tilde{\bm x})\approx &{\rm E}_{Z}(f_n^*-Z(\Tilde{\bm x}))_{+}.
\end{aligned}
\end{equation}
Let $U=f_n^*-Z(\Tilde{\bm x})$, we can rewrite ${\rm NEI}_n(\Tilde{\bm x})$ as
\begin{equation}
\label{nei.1}
\begin{aligned}
\frac{1}{\sqrt{2\pi}\sigma_1(\Tilde{\bm x})}\int_{-\infty}^{\infty}\exp\left\{-\frac{1}{2}\times \frac{(t-\mu_1(\Tilde{\bm x}))^2}{\sigma_1^2(\Tilde{\bm x})}\right\}B(t)dt,\\
\end{aligned}
\end{equation}
where 
$B(t)=\int_0^{\infty} \frac{u}{|t|\sqrt{2\pi}\sigma_2(\Tilde{\bm x})} \exp\left\{-\frac{1}{2}\times \left[u'_2-\frac{\mu_2(\Tilde{\bm x})}{\sigma_2(\Tilde{\bm x})}\right]^2 \right\}du$,
and $u'_2=\frac{f_n^*-u-z_0(\Tilde{\bm x})-t\mu_2(\Tilde{\bm x})} {|t|\sigma_2(\Tilde{\bm x})}$. 
By some easy numerical calculations, we have that, 
\begin{equation}
\label{B}
\begin{aligned}
B(t)
 =&\left[f_n^*-z_0(\Tilde{\bm x})-t\mu_2(\Tilde{\bm x})\right]\Phi_N\left(u_2(f_n^*,t,\Tilde{\bm x})\right)+\\
 & |t|\sigma_2(\Tilde{\bm x}) \phi_N\left( {{u_2(f_n^*,t,\Tilde{\bm x})}}\right).
 \end{aligned}
\end{equation}
Substituting (\ref{B}) into (\ref{nei.1}), 
the desired results then can be obtained.

\end{proof}

\section{{Choice of the correlation functions}}
\label{rbfs}

{In this section, we illustrate the reasons that we choose Gaussian correlation functions} (\ref{exp}) { as the correlation functions in the numerical studies.}

{We compared the prediction accuracy of the one-GP model and the NGP model in the example} \ref{1dgp}{ with different correlation functions}:
\begin{itemize}
    \item {Gaussian (Radial Basis Function) Kernel (\ref{exp})\;}
    \item {Exponential Kernel: $\exp(-\theta\hbar)$;}
    \item {Power-exponential Kernel: $\exp(-\theta\hbar^p), p>0$;}
    \item {Mat\'ern correlation function} (\ref{matern}){ with $\nu=\frac{3}{2}$;}
    \item {Mat\'ern correlation function  with $\nu=\frac{5}{2}$.}
\end{itemize}

{Here, $\hbar \geq 0$ is a distance between two inputs of the GP model. $\theta>0$ is the correlation parameter which can be estimated by the maximum likelihood method. Figure} \ref{Figure kernel1}-  \ref{Figure kernel5} {compare the accuracy  of  one-GP and NGP models with different correlation functions.}

\begin{figure}[!ht]
\centering
\includegraphics[width=\columnwidth]{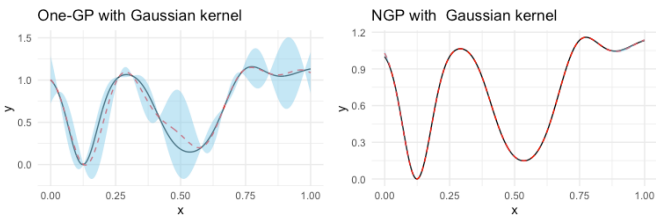}
\caption{{Left:  predictions (red dotted line) and 95\% confidence intervals  of the one-GP model build by using $(\tilde X_{n_0},Y_{n_0})$, with $n_0=10$;  Right: predictions (red dotted line) and 95\% confidence intervals of the NGP model.}}
\label{Figure kernel1}
\end{figure}

\begin{figure}[!ht]
\centering
\includegraphics[width=\columnwidth]{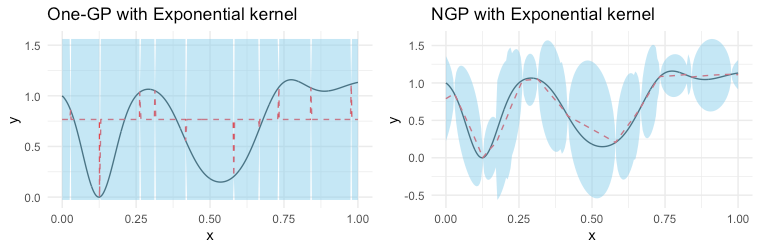}
\caption{{Predictions (red dotted line) and 95\% confidence intervals (blue interval) of the one-GP model (left) and the NGP model (right).}}
\label{Figure kernel2}
\end{figure}

\begin{figure}[!ht]
\centering
\includegraphics[width=\columnwidth]{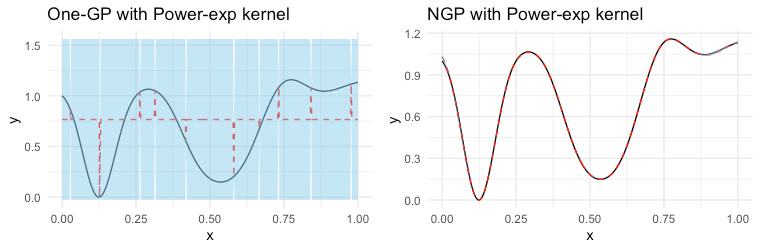}
\caption{{Predictions (red dotted line) and 95\% confidence intervals (blue interval) of the one-GP model (left) and the NGP model (right), with the parameter $p=1.96$.}}
\label{Figure kernel3}
\end{figure}

\begin{figure}[!ht]
\centering
\includegraphics[width=\columnwidth]{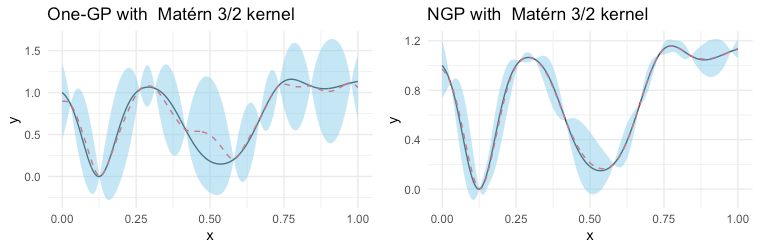}
\caption{{Predictions (red dotted line) and 95\% confidence intervals (blue interval) of the one-GP model (left) and the NGP model (right).}}
\label{Figure kernel4}
\end{figure}

\begin{figure}[!ht]
\centering
\includegraphics[width=\columnwidth]{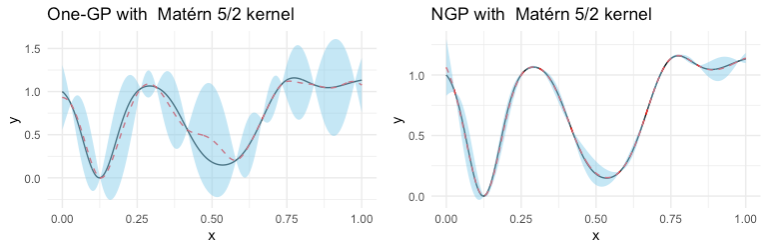}
\caption{{Predictions (red dotted line) and 95\% confidence intervals (blue interval) of the one-GP model (left) and the NGP model (right).}}
\label{Figure kernel5}
\end{figure}

{From Figure} \ref{Figure kernel1}-  \ref{Figure kernel5}{, we can see that, with the same kernel, the NGP model outperforms the one-GP model.  By taking the prediction accuracy of both the one-GP and the NGP models into accout, we recommend choosing the Gaussian correlation functions} (\ref{exp}). 


\end{document}